\tikzstyle{vecArrow} = [thick, decoration={markings,mark=at position
\tikzstyle{innerWhite} = [semithick, white,line width=1.4pt, shorten >= 4.5pt]
\newtheorem{definition}{Definition}
\newtheorem{proposition}[definition]{Proposition}
\newtheorem{lemma}[definition]{Lemma}
\newtheorem{theorem}[definition]{Theorem}
\newtheorem{corollary}[definition]{Corollary}
\newtheorem{conjecture}[definition]{Conjecture}
\newtheorem{remark}[definition]{Remark}
\newtheorem{example}[definition]{Example}
\newtheorem{question}[definition]{Question}
\def\bcj{\begin{conjecture}}
	\def\ecj{\end{conjecture}}
\def\bcr{\begin{corollary}}
	\def\ecr{\end{corollary}}
\def\bd{\begin{definition}}
	\def\ed{\end{definition}}
\def\bea{\begin{eqnarray}}
\def\eea{\end{eqnarray}}
\def\bem{\begin{enumerate}}
	\def\eem{\end{enumerate}}
\def\bex{\begin{example}}
	\def\eex{\end{example}}
\def\bim{\begin{itemize}}
	\def\eim{\end{itemize}}
\def\bl{\begin{lemma}}
	\def\el{\end{lemma}}
\def\bma{\begin{bmatrix}}
	\def\ema{\end{bmatrix}}
\def\bpf{\begin{proof}}
	\def\epf{\end{proof}}
\def\bpp{\begin{proposition}}
	\def\epp{\end{proposition}}
\def\bqu{\begin{question}}
	\def\equ{\end{question}}
\def\br{\begin{remark}}
	\def\er{\end{remark}}
\def\bt{\begin{theorem}}
	\def\et{\end{theorem}}
\def\squareforqed{\hbox{\rlap{$\sqcap$}$\sqcup$}}
\def\qed{\ifmmode\squareforqed\else{\unskip\nobreak\hfil
		\penalty50\hskip1em\null\nobreak\hfil\squareforqed
		\parfillskip=0pt\finalhyphendemerits=0\endgraf}\fi}
\def\endenv{\ifmmode\;\else{\unskip\nobreak\hfil
		\penalty50\hskip1em\null\nobreak\hfil\;
		\parfillskip=0pt\finalhyphendemerits=0\endgraf}\fi}
\newenvironment{proof}{\noindent \textbf{{Proof.~} }}{\qed}
\def\Dbar{\leavevmode\lower.6ex\hbox to 0pt
	{\hskip-.23ex\accent"16\hss}D}
\def\url@leostyle{%
	\@ifundefined{selectfont}{\def\UrlFont{\sf}}{\def\UrlFont{\small\ttfamily}}}
\def\bcj{\begin{conjecture}}
	\def\ecj{\end{conjecture}}
\def\bcr{\begin{corollary}}
	\def\ecr{\end{corollary}}
\def\bd{\begin{definition}}
	\def\ed{\end{definition}}
\def\bea{\begin{eqnarray}}
\def\eea{\end{eqnarray}}
\def\bem{\begin{enumerate}}
	\def\eem{\end{enumerate}}
\def\bex{\begin{example}}
	\def\eex{\end{example}}
\def\bim{\begin{itemize}}
	\def\eim{\end{itemize}}
\def\bl{\begin{lemma}}
	\def\el{\end{lemma}}
\def\bpf{\begin{proof}}
	\def\epf{\end{proof}}
\def\bpp{\begin{proposition}}
	\def\epp{\end{proposition}}
\def\bqu{\begin{question}}
	\def\equ{\end{question}}
\def\br{\begin{remark}}
	\def\er{\end{remark}}
\def\bt{\begin{theorem}}
	\def\et{\end{theorem}}
\def\btb{\begin{tabular}}
	\def\etb{\end{tabular}}
\newcommand{\nc}{\newcommand}
\nc{\bbA}{\mathbb{A}} \nc{\bbB}{\mathbb{B}} \nc{\bbC}{\mathbb{C}}
\nc{\bbD}{\mathbb{D}} \nc{\bbE}{\mathbb{E}} \nc{\bbF}{\mathbb{F}}
\nc{\bbG}{\mathbb{G}} \nc{\bbH}{\mathbb{H}} \nc{\bbI}{\mathbb{I}}
\nc{\bbJ}{\mathbb{J}} \nc{\bbK}{\mathbb{K}} \nc{\bbL}{\mathbb{L}}
\nc{\bbM}{\mathbb{M}} \nc{\bbN}{\mathbb{N}} \nc{\bbO}{\mathbb{O}}
\nc{\bbP}{\mathbb{P}} \nc{\bbQ}{\mathbb{Q}} \nc{\bbR}{\mathbb{R}}
\nc{\bbS}{\mathbb{S}} \nc{\bbT}{\mathbb{T}} \nc{\bbU}{\mathbb{U}}
\nc{\bbV}{\mathbb{V}} \nc{\bbW}{\mathbb{W}} \nc{\bbX}{\mathbb{X}}
\nc{\bbZ}{\mathbb{Z}}
\nc{\bA}{{\bf A}} \nc{\bB}{{\bf B}} \nc{\bC}{{\bf C}}
\nc{\bD}{{\bf D}} \nc{\bE}{{\bf E}} \nc{\bF}{{\bf F}}
\nc{\bG}{{\bf G}} \nc{\bH}{{\bf H}} \nc{\bI}{{\bf I}}
\nc{\bJ}{{\bf J}} \nc{\bK}{{\bf K}} \nc{\bL}{{\bf L}}
\nc{\bM}{{\bf M}} \nc{\bN}{{\bf N}} \nc{\bO}{{\bf O}}
\nc{\bP}{{\bf P}} \nc{\bQ}{{\bf Q}} \nc{\bR}{{\bf R}}
\nc{\bS}{{\bf S}} \nc{\bT}{{\bf T}} \nc{\bU}{{\bf U}}
\nc{\bV}{{\bf V}} \nc{\bW}{{\bf W}} \nc{\bX}{{\bf X}}
\nc{\bZ}{{\bf Z}} \nc{\bm}{{\bf m}} \nc{\bv}{{\bf v}}
\nc{\ba}{{\bf a}} \nc{\be}{{\bf e}} \nc{\bu}{{\bf u}}
\nc{\brr}{{\bf r}}
\nc{\cA}{{\cal A}} \nc{\cB}{{\cal B}} \nc{\cC}{{\cal C}}
\nc{\cD}{{\cal D}} \nc{\cE}{{\cal E}} \nc{\cF}{{\cal F}}
\nc{\cG}{{\cal G}} \nc{\cH}{{\cal H}} \nc{\cI}{{\cal I}}
\nc{\cJ}{{\cal J}} \nc{\cK}{{\cal K}} \nc{\cL}{{\cal L}}
\nc{\cM}{{\cal M}} \nc{\cN}{{\cal N}} \nc{\cO}{{\cal O}}
\nc{\cP}{{\cal P}} \nc{\cQ}{{\cal Q}} \nc{\cR}{{\cal R}}
\nc{\cS}{{\cal S}} \nc{\cT}{{\cal T}} \nc{\cU}{{\cal U}}
\nc{\cV}{{\cal V}} \nc{\cW}{{\cal W}} \nc{\cX}{{\cal X}}
\nc{\cZ}{{\cal Z}}
\nc{\hA}{{\hat{A}}} \nc{\hB}{{\hat{B}}} \nc{\hC}{{\hat{C}}}
\nc{\hD}{{\hat{D}}} \nc{\hE}{{\hat{E}}} \nc{\hF}{{\hat{F}}}
\nc{\hG}{{\hat{G}}} \nc{\hH}{{\hat{H}}} \nc{\hI}{{\hat{I}}}
\nc{\hJ}{{\hat{J}}} \nc{\hK}{{\hat{K}}} \nc{\hL}{{\hat{L}}}
\nc{\hM}{{\hat{M}}} \nc{\hN}{{\hat{N}}} \nc{\hO}{{\hat{O}}}
\nc{\hP}{{\hat{P}}} \nc{\hR}{{\hat{R}}} \nc{\hS}{{\hat{S}}}
\nc{\hT}{{\hat{T}}} \nc{\hU}{{\hat{U}}} \nc{\hV}{{\hat{V}}}
\nc{\hW}{{\hat{W}}} \nc{\hX}{{\hat{X}}} \nc{\hZ}{{\hat{Z}}}
\nc{\hn}{{\hat{n}}}
\def\max{\mathop{\rm max}}
\def\min{\mathop{\rm min}}
\def\supp{\mathop{\rm supp}}
\def\tr{\mathop{\rm Tr}}
\def\wt{\mathop{\rm wt}}
\def\supp{\mathop{\rm supp}}
\newcommand{\bra}[1]{\langle#1|}
\newcommand{\ket}[1]{|#1\rangle}
\newcommand{\ketbra}[2]{|#1\rangle\!\langle#2|}
\newcommand{\braket}[2]{\langle#1|#2\rangle}
\newcommand{\fl}[2]{\left\lfloor\frac{#1}{#2}\right\rfloor}
\begin{document}

\title{Bounds on $k$-Uniform Quantum States}

	\author{Fei Shi, Yu Ning, Qi Zhao and Xiande Zhang
	
	\thanks{Fei Shi is with Department of Computer Science, University of Hong Kong, Hong Kong, 999077, China    }
	
	\thanks{Yu Ning is with Hefei National Laboratory, University of Science and Technology of China, Hefei 230088, China}
	
	\thanks{Qi Zhao is with Department of Computer Science, University of Hong Kong,   Hong Kong, 999077, China.}
	
	\thanks{Xiande Zhang is with  School of Mathematical Sciences,
		University of Science and Technology of China, Hefei, 230026, China; and with Hefei National Laboratory, University of Science and Technology of China, Hefei 230088, China (email: drzhangx@ustc.edu.cn).
	}
}

\maketitle

\begin{abstract}
	  Do $N$-partite $k$-uniform states always exist when $k\leq \fl{N}{2}-1$? In this work,  we provide new upper bounds on the parameter $k$ for the existence of $k$-uniform states in $(\bbC^{d})^{\otimes N}$ when $d=3,4,5$, which extend  Rains' bound in 1999 and improve Scott's bound in 2004. Since a $k$-uniform state in  $(\bbC^{d})^{\otimes N}$  corresponds to a pure $((N,1,k+1))_{d}$ quantum error-correcting codes,  we also give new upper bounds on the minimum distance $k+1$ of pure $((N,1,k+1))_d$ quantum error-correcting codes.
   Furthermore, we generalize Scott's bound to heterogeneous systems, and show some non-existence results of absolutely maximally entangled states in $\bbC^{d_1}\otimes(\bbC^{d_2})^{\otimes 2n}$.
\end{abstract}

\begin{IEEEkeywords}
$k$-uniform states, absolutely maximally entangled states, quantum error-correcting codes, Shor-Laflamme enumerators,  shadow enumerators
\end{IEEEkeywords}

\section{Introduction}

Mutipartite entanglement plays a key role in quantum information and quantum computing, with numerous applications including quantum error-correcting codes \cite{laflamme1996perfect,knill1997theory}, quantum key distribution \cite{ekert1991quantum,gisin2002quantum,bennett1992quantum}, and quantum teleportation \cite{bennett1993teleporting,bouwmeester1997experimental}. However, it is difficult to classify and quantify  multipartite entanglement. The $k$-uniform state, as a special kind of multipartite entangled state, has attracted widespread attention.

A $k$-uniform state in $N$-partite Hilbert space $(\bbC^{d})^{\otimes N}$ is defined as a normalized column vector, with the property that all its reductions to $k$ parties are maximally mixed. For example, the Greenberger-Horne-Zeilinger (GHZ) state is a $1$-uniform state.
$k$-Uniform states are connected to quantum error-correcting codes (QECCs) \cite{scott2004multipartite}, classical error-correcting codes \cite{feng2017multipartite,k_uniform_masking}, orthogonal arrays \cite{goyeneche2014genuinely,li2019k,zang20193,pang2019two},   Latin squares \cite{goyeneche2015absolutely,goyeneche2018entanglement},  symmetric matrices \cite{feng2017multipartite},  and graph states \cite{helwig2013absolutelygraph}. They also can be used for quantum information masking \cite{k_uniform_masking}, quantum secret sharing  \cite{helwig2012absolute}, and holographic quantum
codes \cite{pastawski2015holographic}.
There exists a trivial upper bound on the parameter $k$ for the existence of $k$-uniform states in $(\bbC^{d})^{\otimes N}$, i.e. $k\leq \fl{N}{2}$ \cite{scott2004multipartite}.
Specially, a $\fl{N}{2}$-uniform state is also called  an absolutely maximally
entangled (AME) state \cite{helwig2012absolute}.   However, AME states are very rare for given local dimensions. For example, when $d=2$,  AME states exist only for  $2$-, $3$-, $5$-, and $6$-qubits
\cite{scott2004multipartite,rains1999quantum,huber2017absolutely}. The existence  of AME states in $(\bbC^6)^{\otimes 4}$ was listed as  one of the five open problems in quantum information \cite{horodecki2020five}, which was solved recently \cite{rather2021thirty}. 

A $k$-uniform state in $(\bbC^{d})^{\otimes N}$ corresponds to a pure $((N,1,k+1))_d$ QECC \cite{scott2004multipartite}.  According to this correspondence,  Rains \cite{rains1999quantum} gave an upper bound on the parameter $k$ for the existence of $k$-uniform states in $(\bbC^{2})^{\otimes N}$ (see Theorem~\ref{rain}), which improved the trivial upper bound. For general local dimension $d$, Scott \cite{scott2004multipartite} improved the trivial upper bound by showing the non-existence of AME states (see Theorem~\ref{scott}). When $d=3,4,5$, Huber \emph{et al.} gave the non-existence of more AME states \cite{huber2018bounds}. Note that when $d=2$, Scott's bound is weaker than Rains' bound, since Rains' bound not only shows the non-existence of more AME states, but also shows the non-existence of $k$-uniform states for some $k\leq \fl{N}{2}-1$. When $d\geq 3$, a tighter upper bound than Scott's bound remains unknown.

It is known that $k$-uniform states can be generalized to heterogeneous systems $\bbC^{d_1}\otimes \bbC^{d_2}\otimes \cdots\otimes \bbC^{d_N}$ whose local dimensions are not all equal \cite{goyeneche2016multipartite}. Recently, there are some constructions on $k$-uniform states in heterogeneous systems \cite{shen2021absolutely,pang2021quantum,shi2022k,feng2023constructions}. However, there are few results for the non-existence of $k$-uniform states in heterogeneous systems. This is because $k$-uniform states in heterogeneous systems are more complicated than those in homogeneous systems.
 In  \cite{shi2022k},  the authors showed the non-existence of some AME states in 9-partite, 11-partite, and 13-partite heterogeneous systems.  We will give more general nonexistence results.

In this paper,  by using a series of
enumerators and  the Fast Zeilberger Package in Mathematica \cite{Fastzeil}, we give some upper bounds on the parameter $k$ for the existence of $k$-uniform states in $(\bbC^{d})^{\otimes N}$ when $d=3,4,5$ (see Theorem~\ref{D3bound} and Tables~\ref{D3}, ~\ref{D4}, \ref{D5}), which improve Scott's bound. Moreover, we generalize Scott's bound to heterogeneous systems (see Theorem~\ref{thm:generalize_scott}), and show some non-existence results of AME states in $\bbC^{d_1}\otimes(\bbC^{d_2})^{\otimes 2n}$ (see Corollary~\ref{cor:AME} and Table~\ref{table:AME}).



	 Our paper is organized as follows. In Section~\ref{pre}, we introduce the concept of $k$-uniform states and enumerators. In Section~\ref{se:bound}, we give some upper bounds on the parameter $k$ for the existence of $k$-uniform states in $(\bbC^{d})^{\otimes N}$ when $d=3,4,5$.  Some non-existence results of AME states in heterogeneous systems are given in Section~\ref{sec:AME}.
  Finally, a conclusion is given in Section \ref{con}.

\section{Preliminaries}\label{pre}
In this section, we recall the concepts of $k$-uniform states and enumerators, and some related results.
\subsection{$k$-uniform states}
For any positive integer $d\geq 2$, we denote $\bbZ_d:=\{0,1,\ldots,d-1\}$ and  $\bbZ_d^N:=\bbZ_d\times \bbZ_d\times \cdots \times \bbZ_d$.  An $N$-partite pure quantum state $\ket{\psi}$ with local dimension $d$ refers to a normalized vector in the Hilbert space $(\bbC^d)^{\otimes N}:=\bbC^d\otimes\bbC^d\otimes\cdots\otimes\bbC^d$, which  can be written as $\ket{\psi}=\sum_{\bu\in \bbZ_{d}^{N}}a_{\bu}\ket{\bu}$, where $a_{\bu}\in \bbC$. The density matrix of $\ket{\psi}$ is defined by $\rho:=\ketbra{\psi}{\psi}=\sum_{\bu,\bu'\in \bbZ_d^N}a_{\bu}\overline{a_{\bu'}}\ket{\bu}\bra{\bu'}$.
 For any subset $A$ of $[N]:=\{1,2,\dots,N\}$, let $A^c$ be the complement of $A$.  For a vector $\bu$ of $\bbZ_d^N$, let $\bu_A$ be the  projection of $\bu$ on $A$. The reduced density matrix of $\ket{\psi}$ to $A$ is defined by  $\rho_A:=\sum_{\bu,\bu'\in \bbZ_d^N}a_{\bu}\overline{a_{\bu'}}\braket{\bu_{A^c}}{\bu'_{A^c}}\ketbra{\bu_A}{\bu'_A}$,  where  $\braket{\bu_{A^c}}{\bu'_{A^c}}=1$ if $\bu_{A^c}=\bu'_{A^c}$ and $0$ otherwise.


\begin{definition}
	A pure state $\ket{\psi}=\sum_{\bu\in \bbZ_{d}^{N}}a_{\bu}\ket{\bu}\in (\bbC^d)^{\otimes N}$ is called a \emph{$k$-uniform state} if the reduction to any $k$-parties is maximally mixed, i.e. for any $k$-subset $A$ of $[N]$, $\rho_{A}=\frac{1}{d^k}\sum_{\bu_A\in\bbZ_d^k}\ketbra{\bu_A}{\bu_A}$. Specially, a $\lfloor\frac{N}{2}\rfloor$-uniform state in $(\bbC^{d})^{\otimes N}$ is called an \emph{absolutely maximally entangled (AME) state}.
\end{definition}



For example, the Bell state
\begin{equation*}
\ket{\psi}=\frac{1}{\sqrt{2}}(\ket{00}+\ket{11}) \in \bbC^{2}\otimes \bbC^{2}
\end{equation*}
 is a $1$-uniform state (AME state).  This is because $\rho_{\{1\}}=\rho_{\{2\}}=\frac{1}{2}\sum_{i\in\bbZ_2}\ketbra{i}{i}$.
A $k$-uniform state in $(\bbC^d)^{\otimes N}$ corresponds to a pure $(N, 1, k+1)_d$ QECC \cite{scott2004multipartite}. According to the Schmidt decomposition of bipartite pure states \cite{Nielsen2011Quantum} or quantum Singletion bound for QECCs \cite{rains1999nonbinary}, there exists a trivial upper bound on the parameter $k$ for the existence of $k$-uniform states in $(\bbC^d)^{\otimes N}$, i.e.
\begin{equation*}
   k\leq \fl{N}{2}.
\end{equation*}
When $d=2$, Rains \cite{rains1999quantum} improved this trivial upper bound.
\begin{theorem}[Rains' bound \cite{rains1999quantum}]\label{rain}
	For a $k$-uniform state in $(\bbC^{2})^{\otimes(6m+\ell)}$, where $m\geq 0$ and $0\leq\ell \leq 5$, it satisfies
\begin{equation*}
 k\leq\left\{
	\begin{array}{lll}
	2m+1,&   &\text{if $\ell<5$};  \\
	2m+2,&   &\text{if $\ell=5$}.
	\end{array}
	\right.
\end{equation*}
\end{theorem}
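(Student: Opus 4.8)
The plan is to follow the quantum weight-enumerator approach of Rains: reduce the existence of a $k$-uniform state on $N=6m+\ell$ qubits to a finite linear-feasibility problem on the coefficients of its Shor--Laflamme enumerator, and show that this problem has a threshold that is periodic with period $6$ in $N$. \emph{Step 1 (from the state to its enumerators).} If $|\psi\rangle\in(\bbC^2)^{\otimes N}$ is $k$-uniform, then every reduction to at most $k$ parties is maximally mixed, so $\langle\psi|E|\psi\rangle=0$ for every non-identity Pauli $E$ with $1\le\wt(E)\le k$. Writing $A(x,y)=\sum_j A_jx^{N-j}y^j$ and $B(x,y)=\sum_j B_jx^{N-j}y^j$ for the Shor--Laflamme enumerators of the corresponding pure $((N,1,k+1))_{2}$ code, one has $A=B$ (for $K=1$ both $A_j$ and $B_j$ equal $\sum_{\wt(E)=j}|\langle\psi|E|\psi\rangle|^2$), $A_0=1$, $A_j=0$ for $1\le j\le k$, and $A_j\ge0$ for all $j$; the quantum MacWilliams identity $B=A\circ T$ with $T\colon(x,y)\mapsto\bigl(\tfrac{x+3y}{2},\tfrac{x-y}{2}\bigr)$ then forces $A$ to be $T$-invariant.

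\emph{Step 2 (normal form via invariance and the shadow).} Because $T$ fixes $x+y$ and negates $x-3y$, a homogeneous degree-$N$ polynomial is $T$-invariant exactly when $A(x,y)=\tfrac{1}{4^N}\sum_{i=0}^{\lfloor N/2\rfloor}c_i(x+y)^{N-2i}(x-3y)^{2i}$. The shadow enumerator is $S(x,y)=A\bigl(\tfrac{x+3y}{2},\tfrac{y-x}{2}\bigr)$, and this substitution sends $x+y\mapsto 2y$ and $x-3y\mapsto 2x$, hence $S(x,y)=\tfrac{1}{2^N}\sum_i c_i x^{2i}y^{N-2i}$ and $S_{N-2i}=c_i/2^N$. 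Thus Rains' shadow inequality (all $S_j\ge0$) is equivalent to $c_i\ge0$ for all $i$. With the normalisation $A_0=\tfrac{1}{4^N}\sum_i c_i=1$, the theorem reduces to: if $c_i\ge0$, $\sum_i c_i=4^N$, the resulting $A_j$ vanish for $1\le j\le k$, and $A_j\ge0$ for $k<j\le N$, then $k\le 2m+1$, except $k\le 2m+2$ when $\ell=5$.

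\emph{Step 3 (the linear program).} The $k$ vanishing conditions are linear in $(c_0,\dots,c_{\lfloor N/2\rfloor})$, so for each residue $\ell=N\bmod 6$ this is a finite linear-feasibility question. I would resolve it by constructing, for each $\ell$, the extremal enumerator — the essentially unique $T$-invariant polynomial with nonnegative coefficients and nonnegative shadow whose low-weight coefficients vanish to the maximal order ($2m+1$ for $\ell<5$, $2m+2$ for $\ell=5$) — and then proving that its first non-prescribed coefficient ($A_{2m+2}$, resp.\ $A_{2m+3}$) is strictly positive, so it cannot be pushed to $0$ without making some $c_i$ or some $A_j$ negative; equivalently, one produces a dual certificate, i.e.\ nonnegative multipliers for the constraints $\{c_i\ge0\}$ and $\{A_j\ge0\}$ that combine to force $A_0\le0$. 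The splitting into six cases, and in particular the exception at $\ell=5$, is an arithmetic feature of how many parameters $c_i$ survive the vanishing conditions; it is invisible at the level of the $T$-symmetry (whose invariant ring is generated only in degrees $1$ and $2$) and emerges only once the positivity constraints are imposed.

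\emph{Main obstacle.} The real work is all in Step 3: identifying the extremal polynomial and certifying the sign of its critical coefficient. Concretely this comes down to evaluating a one-parameter family of alternating binomial (Krawtchouk-type) sums in closed form and reading off a sign — exactly the sort of hypergeometric identity handled by creative telescoping (the Fast Zeilberger Package), and the template for the $d=3,4,5$ sharpenings in Theorem~\ref{D3bound}. Steps 1 and 2 are routine bookkeeping with the MacWilliams and shadow identities.
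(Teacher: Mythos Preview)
The paper does not actually prove Theorem~\ref{rain}; it is quoted from Rains, and the paper only remarks (just before Theorem~\ref{D3bound}) that the enumerator machinery of Lemmas~\ref{thm:axy}--\ref{lemma:main_method} recovers it for $d=2$, deferring the details to \cite{rains1998shadow,rains1999quantum}. Your Steps~1--2 set up exactly that machinery, with one cosmetic difference: you expand $A(x,y)$ in the eigenbasis $(x+y)^{N-2i}(x-3y)^{2i}$ of the MacWilliams transform, while the paper (Lemma~\ref{thm:axy}) uses $(x+(d-1)y)^{N-2i}(y(x-y))^{i}$. Since $(x-3y)^{2}=(x+y)^{2}-8\,y(x-y)$, the two generating sets for the invariant ring are interchangeable; yours has the advantage, specific to $d=2$, that the shadow is already diagonal ($S_{N-2i}=c_{i}/2^{N}$), so positivity of the $c_{i}$ is read off directly rather than through the triangular inversion of Lemma~\ref{bij}. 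The paper's basis, by contrast, works uniformly in $d$.

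Where you diverge is Step~3. The paper's certificate (Lemma~\ref{lemma:main_method}) is not a full LP dual: once $a_{1}=\cdots=a_{i}=0$, the triangular system \eqref{eq:a_0c_0} forces $c_{i}=\alpha_{i}(N)$ with $\alpha_{i}(N)$ given in closed form by Lemma~\ref{alphain}, and one simply checks whether this violates the sign demanded by the shadow. This uses only the shadow inequalities and never the constraints $A_{j}\ge 0$ for $j>k$, so it is one specific dual inequality rather than a search over extremal polynomials. Your extremal-polynomial/dual-certificate description is closer in spirit to Rains' own presentation, but the concrete step you flag as the ``main obstacle''---evaluating the alternating Krawtchouk-type sum and reading off its sign for each residue $\ell\in\{0,\dots,5\}$---is precisely what the paper also leaves undone for $d=2$. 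Neither your proposal nor the paper completes it here; both point to the same computation.
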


For general $d$, Scott \cite{scott2004multipartite} improved the trivial upper bound as below.
\begin{theorem} [Scott's bound \cite{scott2004multipartite}]\label{scott}
	For a $k$-uniform state in $(\bbC^d)^{\otimes N}$, if
\begin{equation}\label{scottbound}
 N> \left\{
\begin{array}{lll}
2(d^{2}-1),&   &\text{if $N$ is even};  \\
2d(d+1)-1,&   &\text{if $N$ is odd}.
\end{array}
\right.
\end{equation}
then $k\leq\fl{N}{2}-1$.
\end{theorem}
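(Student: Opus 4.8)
The plan is to rule out the extremal case $k=\fl{N}{2}$ and then read off the bound. Suppose, for contradiction, that $\ket{\psi}\in(\bbC^{d})^{\otimes N}$ is $k$-uniform with $k=\fl{N}{2}$, i.e.\ an AME state, and set $\rho=\ketbra{\psi}{\psi}$. For $j\ge 0$ define the Shor--Laflamme enumerator
\[
A_j:=\sum_{\wt(E)=j}\bigl|\bra{\psi}E\ket{\psi}\bigr|^{2},
\]
where $E$ runs over $N$-fold tensor products of local Heisenberg--Weyl operators having exactly $j$ nonidentity tensor factors. Then $A_0=1$, $A_j\ge 0$ for every $j$, and $k$-uniformity of $\ket{\psi}$ is equivalent to $A_1=\dots=A_k=0$, since $\rho_S$ being maximally mixed forces every nontrivial Weyl component of $\rho_S$ to vanish. (For a single pure state the dual enumerator $B_j$ equals $A_j$, so only the constraint $A_j\ge 0$ will be used.)

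The engine is the identity obtained by expanding each reduced state $\rho_S$ in the local Weyl basis on $S$ and counting how many $m$-subsets contain a given support:
\[
\sum_{|S|=m}\tr\!\bigl(\rho_S^{2}\bigr)=\frac{1}{d^{m}}\sum_{j=0}^{m}\binom{N-j}{m-j}A_j ,\qquad 0\le m\le N .
\]
For the left side I invoke purity together with the AME hypothesis: when $|S|=m\ge\lceil N/2\rceil$ the complement has $|S^{c}|=N-m\le\fl{N}{2}=k$, so $\rho_{S^{c}}$ is maximally mixed and $\rho_S$ inherits the same flat spectrum, whence $\tr(\rho_S^{2})=d^{-(N-m)}$. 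Specializing the identity to $m=\fl{N}{2}+1$ and using $A_1=\dots=A_k=0$ determines $A_{k+1}$ from one linear equation; specializing to $m=\fl{N}{2}+2$ then determines $A_{k+2}$ in terms of $A_{k+1}$. After the (routine) simplification of binomial coefficients one is left with
\[
A_{k+2}=P_{N,d}\cdot\bigl(d^{2}-1-k\bigr)\ \ (N\ \text{even}),\qquad A_{k+2}=Q_{N,d}\cdot\bigl(d^{2}+d-1-k\bigr)\ \ (N\ \text{odd}),
\]
where $P_{N,d}$ and $Q_{N,d}$ are manifestly positive (each is a binomial coefficient times $d^{2}-1$ or $d-1$ times a positive rational with small numerator and denominator) for $N$ in the range of interest. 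Since $A_{k+2}\ge 0$ is forced, this gives $k\le d^{2}-1$ when $N$ is even and $k\le d^{2}+d-1$ when $N$ is odd; substituting $k=\fl{N}{2}$ rewrites these as $N\le 2(d^{2}-1)$ and $N\le 2d(d+1)-1$, which is exactly the contrapositive of the theorem.

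The only real work is the binomial bookkeeping that turns the two specializations of the displayed identity into the final closed forms, together with the routine verification that $P_{N,d},Q_{N,d}>0$ — which comes down to $\fl{N}{2}\ge 2$ and is therefore automatic once $N$ exceeds either threshold. For perspective, this is the Rains/Shor--Laflamme weight-enumerator argument in streamlined form: forcing $A_j=B_j$ to take their pure-code values for all $j\le\fl{N}{2}$ pins down, via the quantum MacWilliams transform, the entire ``quantum MDS'' weight distribution, and the negativity above is precisely the failure of $A_{k+2}\ge 0$; the shadow enumerator, which the later sections exploit to push past Scott's bound, is not needed here. The main difficulty is thus computational rather than conceptual: keeping the binomial identities honest and checking that no degenerate small-$N$ case slips through.
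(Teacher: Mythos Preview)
Your proposal is correct and is essentially the same argument the paper records (the paper cites Scott for Theorem~\ref{scott} and reproduces the proof only in the generalized form, Theorem~\ref{thm:generalize_scott}). The only cosmetic difference is that the paper fixes a single $(\fl{N}{2}+2)$-subset $A$ and shows $\sum_{\supp(E_\alpha)=A} r_\alpha^{2}\ge 0$ directly from $\tr(\rho_A^2)$ and the $\tr(\rho_{B_j}^2)$, whereas you sum over all $m$-subsets and work with the full enumerator $A_j$; in the homogeneous case these are the same inequality up to the positive factor $\binom{N}{k+2}$, and your factorizations $A_{k+2}\propto (d^2-1-k)$ (even) and $A_{k+2}\propto (d^2+d-1-k)$ (odd) are exactly what the paper's inequality $(\ref{eq:generalize})$ reduces to when all $d_i=d$.
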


Scott's bound means that AME states do not exist in $(\bbC^d)^{\otimes N}$ when Eq.~\eqref{scottbound} holds.
For $d=3,4,5$, Huber \emph{et al.} gave the non-existence of more AME states \cite{huber2018bounds}, i.e.  $k\leq\fl{N}{2}-1$, when $d=3$ and $N = 8, 12, 13, 14, 16, 17, 19, 21, 23$; $d=4$ and $N = 12, 16, 20, 24, 25, 26, 28, 29, 30, 33, 37, 39$; $d=5$ and $N = 28, 32, 36, 40, 44, 48$.
Note that when $d=2$, Scott's bound is weaker than Rains' bound. According to Scott's bound,  a $k$-uniform state in $(\bbC^2)^{\otimes N}$ must satisfy   $k\leq \fl{N}{2}-1$ for  $N\geq  8$ and $N\neq 9, 11$, while  $k\leq  \fl{N}{2}-1$ for $N=4$ and $N \geq 8$ by Rains' bound.
Moreover, Rains' bound not only shows the non-existence of some AME states, but also shows the non-existence of some $k$-uniform states for $k\leq \fl{N}{2}-1$. For example, Rains' bound shows that $4$-uniform states in $(\bbC^{2})^{\otimes 10}$  do not exist.

\subsection{Enumerators}
 Let $\{e_{j}\}_{j\in \bbZ_{d^2}}$ $(e_{0}=I)$ be an  orthonormal basis acting on $\bbC^{d}$, such that $\tr(e_{i}^{\dag}e_{j})=\delta_{ij}d$. If $j\neq 0$, $\tr(e_{j})=\tr(e_{0}^{\dag}e_{j})=\delta_{0j}d=0$. This basis is also called a \emph{nice error basis} \cite{klappenecker2002beyond}. When $d=2$, $\{e_{j}\}_{j=0}^{3}$ can be built from the Pauli operators; when $d\geq 3$, $\{e_{j}\}_{j\in \bbZ_{d^2}}$ can be built from the  generalized Pauli operators \cite{bertlmann2008bloch}. For the Hilbert space $(\bbC^{d})^{\otimes N}$, a local error basis $\mathcal{E}=\{E_{\alpha}\}$ consists of
 \begin{equation*}
 E_{\alpha}=e_{\alpha_{1}}\otimes e_{\alpha_{2}}\otimes \cdots \otimes e_{\alpha_{N}},
 \end{equation*}
 where $\alpha=(\alpha_{1},\alpha_{2},\cdots,\alpha_{N})\in\bbZ_{d^2}^N$, each $e_{\alpha_{i}}$ acts on $\bbC^{d}$, and $\tr(E_{\alpha}^{\dag}E_{\beta})=\delta_{\alpha\beta}d^{n}$. The support of a local error operator $E_{\alpha}$ is defined as
 $\supp(E_\alpha):=\{i\mid \alpha_i\neq 0, \ 1\leq i\leq N\}$, and the weight of $E_{\alpha}$ is define as  $\wt(E_{\alpha}):=|\supp(E_\alpha)|$. For a subset $A\subset [N]$, we denote $E_{\alpha}^A=\otimes_{i\in A}e_{\alpha_i}$. Specially, let $E_{\alpha}^{\supp}:=E_{\alpha}^{\supp(E_\alpha)}=\otimes_{i\in \supp(E_\alpha)}e_{\alpha_i}$.

For a pure state $\ket{\psi}$ in  $(\bbC^{d})^{\otimes N}$, let $\rho=\ketbra{\psi}{\psi}$ be the density matrix of $\ket{\psi}$. The Shor-Laflamme enumerator \cite{shor1997quantum}  is
\begin{equation*}
    a_j=\sum_{\wt(E_\alpha)=j, E_\alpha\in \cE}\tr(E_\alpha\rho)\tr(E_\alpha^{\dagger}\rho).
\end{equation*}
The corresponding enumerator polynomial is
\begin{equation}\label{Axy}
    A(x,y)=\sum_{j=0}^{N} a_jx^{N-j}y^j.
\end{equation}
Moreover, the quantum MacWilliams identity \cite{shor1997quantum,rains1998quantum,huber2018bounds} is
\begin{equation}\label{eq:AA}
    A(x,y)=A\left(\frac{x+(d^{2}-1)y}{d},\frac{x-y}{d}\right).
\end{equation}
Note that
\begin{equation*}
a_j=\sum_{\wt(E_\alpha)=j, E_\alpha\in \cE}\bra{\psi}E_\alpha\ket{\psi}\bra{\psi}E_\alpha^{\dagger}\ket{\psi}=\sum_{\wt(E_\alpha)=j, E_\alpha\in \cE}|\bra{\psi}E_\alpha\ket{\psi}|^2\geq 0,
\end{equation*}
where $0\leq j\leq N$. Specially, $a_0=|\braket{\psi}{\psi}|^2=1$.

Next, we introduce the shadow enumerator.
The shadow inequality \cite{rains2000polynomial} is as follows. For any fixed subset $T\subset [N]$,
\begin{equation*}
    \sum_{S\subset [N]}(-1)^{|S\cap T|}\tr(\rho_S^2)\geq 0.
\end{equation*}
The shadow enumerator \cite{rains1999quantum,huber2018bounds} is
\begin{equation}\label{eq:s_j1}
   s_j=\sum_{|T|=j}\sum_{S\subset [N]}(-1)^{|S\cap T^c|}\tr(\rho_S^2),
\end{equation}
and the corresponding enumerator polynomial is
\begin{equation}\label{eq:s_j}
    S(x,y)=\sum_{j=0}^Ns_jx^{N-j}y^j,
\end{equation}
where $s_j\geq 0$ for $0\leq j\leq N$.
The shadow enumerator polynomial  and the Shor–Laflamme enumerator polynomial have the following relation \cite{rains1999quantum,huber2018bounds},
\begin{equation*}
    S(x,y)=A\left(\frac{(d-1)x+(d+1)y}{d},\frac{y-x}{d}\right).
\end{equation*}
Moreover,
\begin{equation*}
\begin{aligned}
S(x,y)&=A\left(\frac{(d-1)x+(d+1)y}{d},\frac{y-x}{d}\right)\\
&=A\left(\frac{\frac{(d-1)x+(d+1)y}{d}+(d^{2}-1)\frac{y-x}{d}}{d},\frac{\frac{(d-1)x+(d+1)y}{d}-\frac{y-x}{d}}{d}\right)\\
&=A\left(\frac{-(d-1)x+(d+1)y}{d},\frac{y+x}{d}\right)\\
&=S(-x,y),
\end{aligned}
\end{equation*}
where the second equality is from Eq.~\eqref{eq:AA}. Then $S(x,y)=S(-x,y)$ implies that $s_{N-j}=0$ when $j$ is odd.

Above all, for a pure state $\ket{\psi}$ in $(\bbC^d)^{\otimes N}$, the following
constraints must hold \cite{shor1997quantum,rains1998quantum,rains1999quantum,rains2000polynomial,huber2018bounds}: 
\begin{align}
&A(x,y)=A\left(\frac{x+(d^{2}-1)y}{d},\frac{x-y}{d}\right);\label{eq4}\\
&a_0=1; \label{eq2}\\
&a_{j}\geq 0, \quad \text{for all $1\leq j\leq N$};\label{eq3}\\
&S(x,y)=A\left(\frac{(d-1)x+(d+1)y}{d},\frac{y-x}{d}\right);\label{eq5}\\
&s_j\geq 0, \quad \text{for all $0\leq j\leq N$};\label{eq:s_jgeq}\\
&s_{N-j}=0, \quad \text{when $j$ is odd, for all $0\leq j \leq N$.}\label{eq:s_jeq}
\end{align}
Specially, if $\ket{\psi}$ is a $k$-uniform state in $(\bbC^d)^{\otimes N}$,
then there is another constraint \cite{shor1997quantum,rains1998quantum,huber2018bounds}:
\begin{equation}\label{eq1}
    a_j=0, \quad \text{for all $1\leq j\leq k$.}
\end{equation}
This is because $a_j=\sum_{\wt(E_\alpha)=j, E_\alpha\in \cE}\tr(E_\alpha\rho)\tr(E_\alpha^{\dagger}\rho)=\sum_{\wt(E_\alpha)=j, E_\alpha\in \cE}|\tr(E_\alpha^{\supp}\rho_{\supp(E_\alpha)})|^2
=\sum_{\wt(E_\alpha)=j, E_\alpha\in \cE}$ \par
\noindent $|\frac{1}{d^j}\tr(E_\alpha^{\supp})|^2=0$ for $1\leq j\leq k$.

\section{New upper bounds on the parameter $k$ for the existence of $k$-uniform states in $(\bbC^{d})^{\otimes N}$}\label{se:bound}
In this section, we extend the method used in \cite{rains1999quantum,rains1998shadow} for deducing Rains' bound in Theorem~\ref{rain} from $d=2$ to $d=3,4,5$.
Firstly, we give another form of $A(x,y)$ and $S(x,y)$.

\begin{lemma}\label{thm:axy}
For a pure state $\ket{\psi}$ in $(\bbC^d)^{\otimes N}$, there exist coefficients $c_{i}\in \bbR$, $0\leq i \leq \fl{N}{2}$ such that
\begin{equation*}
\begin{aligned}
A(x,y)&=\sum_{i=0}^{\fl{N}{2}}c_{i}\left(x+(d-1)y\right)^{N-2i}\left(y(x-y)\right)^{i},\\
   S(x,y)&=\sum_{i=0}^{\fl{N}{2}}(-1)^{i}c_{i}2^{N-2i}d^{-i}y^{N-2i}\left(x^{2}-y^{2}\right)^{i}.
\end{aligned}
\end{equation*}
\end{lemma}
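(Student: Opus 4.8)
The plan is to exploit that the quantum MacWilliams identity \eqref{eq4} says precisely that $A(x,y)$ is a fixed point of the linear involution $\sigma$ on homogeneous degree-$N$ polynomials in $x,y$ given by $(\sigma f)(x,y)=f\!\left(\frac{x+(d^{2}-1)y}{d},\frac{x-y}{d}\right)$, and then to describe the fixed subspace of $\sigma$ explicitly. First I would record two invariants: a direct substitution shows that the linear form $u:=x+(d-1)y$ and the quadratic form $q:=y(x-y)$ each satisfy $\sigma u = u$ and $\sigma q = q$. Indeed, under the substitution $y\mapsto \frac{x-y}{d}$ one computes $x-y\mapsto dy$, so $y(x-y)\mapsto \frac{x-y}{d}\cdot dy = y(x-y)$, and the check for $u$ is the same. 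Consequently every polynomial $p_i:=u^{N-2i}q^i$, $0\le i\le\fl{N}{2}$, lies in the fixed subspace of $\sigma$.

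Next I would check that $\{p_i\}_{i=0}^{\fl{N}{2}}$ is a basis of that fixed subspace. Linear independence is immediate: writing each $p_i$ in the coordinates $(u,y)$ (legitimate since $x=u-(d-1)y$) gives $p_i=u^{N-2i}y^i(u-dy)^i=u^{N-i}y^i-d\,i\,u^{N-i-1}y^{i+1}+\cdots$, so $p_i$ has $y$-valuation exactly $i$, and polynomials of pairwise distinct valuations are independent. For the dimension count I would diagonalize the $2\times 2$ matrix $M=\frac1d\begin{pmatrix}1 & d^{2}-1\\ 1 & -1\end{pmatrix}$ implementing the substitution: one has $M^{2}=I$ with eigenvalues $+1$ and $-1$, so in a basis of eigenvectors $\sigma$ acts on degree-$N$ forms by negating one of the two coordinates; hence the $+1$-eigenspace is spanned by the monomials with an even exponent on the negated coordinate and has dimension exactly $\fl{N}{2}+1$. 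Matching dimensions, the $p_i$ span the fixed subspace, so $A(x,y)=\sum_{i=0}^{\fl{N}{2}}c_i\,u^{N-2i}q^i$ for some scalars $c_i$; these are real because $A$ has real coefficients and the $c_i$ are recovered by solving a real linear system that is triangular in the powers of $y$.

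Finally, to obtain the formula for $S(x,y)$ I would substitute this expression for $A$ into \eqref{eq5}, i.e.\ apply the substitution $x\mapsto \frac{(d-1)x+(d+1)y}{d}$, $y\mapsto\frac{y-x}{d}$ to each $p_i$. Tracking the two building blocks: $u=x+(d-1)y\mapsto 2y$, while $y\mapsto\frac{y-x}{d}$ and $x-y\mapsto x+y$, so $q=y(x-y)\mapsto \frac{y-x}{d}(x+y)=-\frac{x^{2}-y^{2}}{d}$. Hence $p_i\mapsto (2y)^{N-2i}\bigl(-\tfrac{x^{2}-y^{2}}{d}\bigr)^{i}=(-1)^{i}2^{N-2i}d^{-i}y^{N-2i}(x^{2}-y^{2})^{i}$, which gives exactly the claimed expansion of $S(x,y)$ with the same coefficients $c_i$.

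I do not expect a serious obstacle: the one step needing care is the dimension bookkeeping, namely verifying that $u$ and $q$ generate the full space of $\sigma$-invariants rather than just a proper subspace — this is exactly what the eigenvalue analysis of $M$ secures — after which everything reduces to the short substitution computations above.
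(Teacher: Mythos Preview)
Your proof is correct and follows the same overall route as the paper: identify the MacWilliams substitution as an order-two linear map, exhibit $u=x+(d-1)y$ and $q=y(x-y)$ as invariants, conclude that $A(x,y)$ lies in the span of the $u^{N-2i}q^i$, and then substitute to obtain $S(x,y)$. The only point of departure is in how the spanning claim is justified. The paper invokes a classical invariant-theory result (Theorem~5 on p.~605 of MacWilliams--Sloane) stating that the ring of invariants of the group $\{I,M\}$ is generated by $u$ and $q$, and then uses homogeneity to restrict the exponents. You instead give a self-contained elementary argument: a $y$-valuation triangularity argument for linear independence of the $p_i$, combined with a direct dimension count of the $+1$-eigenspace of $\sigma$ on degree-$N$ forms via diagonalizing $M$. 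Your version avoids the external citation and makes the linear-algebraic content fully explicit, at the cost of a few extra lines; the paper's version is shorter but relies on a reference the reader must look up. Either way the substitution computations for $S(x,y)$ are identical.
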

\begin{proof} By Eq.~(\ref{eq4}), $A(x,y)=A\left(\frac{x+(d^{2}-1)y}{d},\frac{x-y}{d}\right)$, which means that $A(x,y)$ is invariant under the transformation:
\begin{equation*}
T:
\begin{pmatrix}
        x\\
        y
\end{pmatrix}
  \rightarrow  A\begin{pmatrix}
        x\\
        y
\end{pmatrix},
\end{equation*}
where $A=\frac{1}{d}\begin{pmatrix}
        1 &d^2-1\\
        1  &-1
\end{pmatrix}$. Since $A^2=\bbI$, $A(x,y)$ must be invariant under the group $\{A,\bbI\}$.  By  \cite[Theorem 5, pp 605]{macwilliams1977theory}, 
$A(x,y)$ is a polynomial in $x+(d-1)y$ and $y(x-y)$, i.e.,
\begin{equation*}
   A(x,y)=\sum_{i,j\geq 0}c_{ij}(x+(d-1)y)^{j}(y(x-y))^{i}.
\end{equation*}
Recall from Eq.~\eqref{Axy}, $A(x,y)=\sum_{k=0}^{N}a_{k}x^{N-k}y^{k}$, $a_{k}\geq 0$. Since the degree of every monomial in $A(x,y)$ is $N$,  we have $j+2i=N$. This implies $0\leq i\leq \fl{N}{2}$. Thus
\begin{equation}\label{eqq}
A(x,y)=\sum_{i=0}^{\fl{N}{2}}c_{i}(x+(d-1)y)^{N-2i}(y(x-y))^{i},
\end{equation}
where $c_{i}\in \bbR$ due to $a_{k}\in \bbR$.

By Eq.~(\ref{eq5}) and Eq.~(\ref{eqq}), we have
\begin{equation*}
S(x,y)=A\left(\frac{(d-1)x+(d+1)y}{d},\frac{y-x}{d}\right)
=\sum_{i=0}^{\fl{N}{2}}(-1)^{i}c_{i}2^{N-2i}d^{-i}y^{N-2i}\left(x^{2}-y^{2}\right)^{i}.
\end{equation*}
\end{proof}
\vspace{0.4cm}

According to Eqs.~\eqref{eq:s_j}, \eqref{eq:s_jgeq}, and \eqref{eq:s_jeq}, we know that $S(x,y)=\sum_{j=0}^{N}s_{j}x^{N-j}y^{j}$, where  $s_j\geq 0$, and $s_{N-j}=0$ when $j$ is odd for $0\leq j\leq N$, then we can rewrite:
\begin{equation*}
S(x,y)=\sum_{j=0}^{\fl{N}{2}}b_{j}x^{N-(2j+t)}y^{2j+t},\quad t\equiv N \pmod 2,
\end{equation*}
where $b_{j}=s_{2j+t}\geq 0$ for $0\leq j\leq \fl{N}{2}$. Thus for a pure state in $(\bbC^d)^{\otimes N}$, by setting $x=1$, we have:
\begin{align}
A(1,y)&=\sum_{j=0}^{N}a_{j}y^{j}=\sum_{i=0}^{\fl{N}{2}}c_{i}(1+(d-1)y)^{N-2i}(y(1-y))^{i},\label{A1}\\
S(1,y)&=\sum_{j=0}^{\fl{N}{2}}b_{j}y^{2j+t}=\sum_{i=0}^{\fl{N}{2}}(-1)^{i}c_{i}2^{N-2i}d^{-i}y^{N-2i}\left(1-y^{2}\right)^{i},\label{A2}
\end{align}
where $t\equiv N \pmod 2$, $a_{0}=1$, $a_{j},b_{j}\geq 0$.

By comparing the coefficient of $y^j$ on both sides of Eq.~\eqref{A1} for $0\leq j\leq \fl{N}{2}$, we have
\begin{equation*}
a_j=\sum_{i=0}^j\left(\sum_{\ell=0}^{j-i}\binom{N-2i}{\ell}(d-1)^\ell\binom{i}{j-i-\ell}(-1)^{j-i-\ell}\right)c_i, \\
\end{equation*}
for  $0\leq j\leq \fl{N}{2}$, where $\binom{0}{0}=1$, and $\binom{i}{j}=0$ if $i<j$.
That is
\begin{equation}\label{eq:a_0c_0}
\left\{
\begin{aligned}
a_0&=c_0,\\
a_1&=\binom{N}{1}(d-1)c_0+c_1,\\
a_2&=\binom{N}{2}(d-1)^2c_0+\left(\binom{N-2}{1}(d-1)-1\right)c_1+c_2,\\
&\ldots\\
a_{\fl{N}{2}}&=\sum_{i=0}^{\fl{N}{2}}\left(\sum_{\ell=0}^{{\fl{N}{2}}-i}\binom{N-2i}{\ell}(d-1)^\ell\binom{i}{{\fl{N}{2}}-i-\ell}(-1)^{{\fl{N}{2}}-i-\ell}\right)c_i,\\
\end{aligned}
\right.
\end{equation}
which implies that $c_{i}$ is a linear combination of the coefficients $a_{j}$ for $0\leq j\leq i$, i.e.,
\begin{equation}\label{eq:alpha_i}
    c_{i}=\sum_{j=0}^{i}\alpha_{ij}a_{j}, \quad 0\leq i\leq \fl{N}{2}, \text{ for some real $\alpha_{ij}$}.
\end{equation}
We denote $\alpha_i(N)=\alpha_{i0}$, then we have the following result.

\begin{lemma}\label{alphain}
$\alpha_0(N)=1$, and $\alpha_i(N)=-\frac{N(d-1)}{i}  \sum_{j=0}^{i-1}(1-d)^{j}\binom{N-2i+j}{N-2i}\binom{2i-2-j}{i-1}$, for
	$1\leq i \leq \fl{N}{2}$.
\end{lemma}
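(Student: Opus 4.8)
The plan is to extract a closed form for $\alpha_i(N)=\alpha_{i0}$ by tracking only the $a_0$-dependence in the linear inversion that turns the $a_j$'s into the $c_i$'s. Concretely, recall from Eq.~\eqref{A1} that
\begin{equation*}
\sum_{j\ge 0}a_j y^j \;=\; \sum_{i=0}^{\fl{N}{2}} c_i\,(1+(d-1)y)^{N-2i}\,\bigl(y(1-y)\bigr)^i .
\end{equation*}
Setting all $a_j=0$ for $j\ge 1$ and $a_0=1$ corresponds, by Eq.~\eqref{eq:alpha_i}, to taking $c_i=\alpha_i(N)$ for every $i$. So $\alpha_i(N)$ is determined by the single functional equation
\begin{equation*}
1 \;=\; \sum_{i=0}^{\fl{N}{2}} \alpha_i(N)\,(1+(d-1)y)^{N-2i}\,\bigl(y(1-y)\bigr)^i
\end{equation*}
as an identity of formal power series in $y$ (the right-hand side is a polynomial of degree $N$, and we only need the coefficients of $y^0,\dots,y^{\fl N2}$ to vanish except the constant term, which is automatic since the higher coefficients are not constrained). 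First I would divide through by $(1+(d-1)y)^{N}$ and introduce the single variable $z=\dfrac{y(1-y)}{(1+(d-1)y)^2}$, so that the equation becomes $\sum_i \alpha_i(N) z^i = (1+(d-1)y)^{-N}$, i.e. a Lagrange-type expansion: $\alpha_i(N)$ is the coefficient of $z^i$ in the expansion of $(1+(d-1)y)^{-N}$ where $y=y(z)$ is the power-series solution of $y=z(1-y)^{-1}(1+(d-1)y)^{2}$ vanishing at $z=0$.

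Next I would apply the Lagrange–Bürmann inversion formula. Writing $y = z\,\phi(y)$ with $\phi(y)=\dfrac{(1+(d-1)y)^2}{1-y}$, the formula gives, for $H(y)=(1+(d-1)y)^{-N}$,
\begin{equation*}
[z^i]\,H(y(z)) \;=\; \frac{1}{i}\,[y^{i-1}]\Bigl(H'(y)\,\phi(y)^i\Bigr)
\;=\; \frac{1}{i}\,[y^{i-1}]\left(\frac{-N(d-1)}{(1+(d-1)y)^{N+1}}\cdot\frac{(1+(d-1)y)^{2i}}{(1-y)^i}\right),
\end{equation*}
so that
\begin{equation*}
\alpha_i(N) \;=\; \frac{-N(d-1)}{i}\,[y^{i-1}]\,\frac{(1+(d-1)y)^{2i-N-1}}{(1-y)^i}.
\end{equation*}
Then I would expand both factors as binomial series: $(1-y)^{-i}=\sum_{j\ge 0}\binom{i-1+j}{i-1}y^j$ and $(1+(d-1)y)^{2i-N-1}=\sum_{\ell\ge 0}\binom{2i-N-1}{\ell}(d-1)^\ell y^\ell$, collect the coefficient of $y^{i-1}$, and convert the negative upper-index binomial $\binom{2i-N-1}{\ell}$ into the nonnegative-index form using $\binom{2i-N-1}{\ell}=(-1)^\ell\binom{N-2i+\ell}{\ell}=(-1)^\ell\binom{N-2i+\ell}{N-2i}$. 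Re-indexing the sum (writing the exponent of $(1-d)$ as $j$ and noting the remaining factor becomes $\binom{2i-2-j}{i-1}$) should land exactly on the claimed expression
\begin{equation*}
\alpha_i(N) \;=\; -\frac{N(d-1)}{i}\sum_{j=0}^{i-1}(1-d)^{j}\binom{N-2i+j}{N-2i}\binom{2i-2-j}{i-1},
\end{equation*}
with the $i=0$ case $\alpha_0(N)=a_0$-coefficient $=1$ read off directly from $c_0=a_0$ in Eq.~\eqref{eq:a_0c_0}. As a sanity check I would verify $i=1$: the formula gives $\alpha_1(N)=-N(d-1)$, matching $a_1=\binom N1(d-1)c_0+c_1$, i.e. $c_1=a_1-N(d-1)a_0$, so $\alpha_{10}=-N(d-1)$.

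The main obstacle I anticipate is purely bookkeeping: making sure the Lagrange inversion is applied to the correct compositional inverse (the roles of $y$ and $z$, and that $\phi(0)=1\ne 0$ so the hypotheses hold), and then matching the doubly-indexed binomial sum to the paper's stated single sum — in particular checking that the upper limit of summation is genuinely $i-1$ (terms with $j\ge i$ give $\binom{2i-2-j}{i-1}=0$, which must be confirmed) and that the sign conventions on $(d-1)$ versus $(1-d)$ come out right. An alternative, if the generating-function manipulation is deemed too slick for the paper's style, is to prove the formula by induction on $i$ directly from the triangular system \eqref{eq:a_0c_0}: the recursion expressing $c_i$ in terms of $a_0,\dots,a_i$ and lower $c$'s yields a recursion for $\alpha_i(N)$ in terms of $\alpha_0(N),\dots,\alpha_{i-1}(N)$, and one checks the closed form satisfies it using a Vandermonde-type binomial identity; this is the step where invoking a Zeilberger-style certificate (as the paper does elsewhere) would be natural, but I expect the Lagrange-inversion route to be cleaner.
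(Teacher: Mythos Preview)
Your proposal is correct and follows essentially the same route as the paper: both set the $a_j$ with $j\ge 1$ to zero to isolate $\alpha_i(N)$, divide Eq.~\eqref{A1} by $(1+(d-1)y)^N$, introduce the variable $\phi=z=\dfrac{y(1-y)}{(1+(d-1)y)^2}$, apply the B\"urmann--Lagrange formula (the paper's Lemma~\ref{lagrange1}) to $H(y)=(1+(d-1)y)^{-N}$, and then expand the resulting coefficient $[y^{i-1}]\,(1+(d-1)y)^{2i-N-1}(1-y)^{-i}$ as a binomial convolution. The only cosmetic difference is that the paper explicitly carries along the higher-order remainder $\sum_{j>\lfloor N/2\rfloor}a_jy^j$ and argues via Lemma~\ref{lagrange} that it contributes only to powers $\phi^j$ with $j>\lfloor N/2\rfloor$, whereas you subsume this in the parenthetical remark about working modulo $y^{\lfloor N/2\rfloor+1}$; both justifications are equivalent.
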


The proof of Lemma~\ref{alphain}  is given in Appendix~\ref{appendix:alphain}. Similarly, by comparing the coefficient of  $y^{2j+t}$ on both sides of Eq.~\eqref{A2} for $0\leq j\leq \fl{N}{2}$,
we have
\begin{equation}\label{eq:newbj}
\begin{aligned}
    b_j&=\sum_{i=0}^{\fl{N}{2}}2^{N-2i}d^{-i}\binom{i}{i+j-\fl{N}{2}}(-1)^{j-\fl{N}{2}}c_i\\
    &=\sum_{i=0}^{\fl{N}{2}}2^{N-2i}d^{-i}\binom{i}{\fl{N}{2}-j}(-1)^{\fl{N}{2}-j}c_i\\
     &=\sum_{i=\fl{N}{2}-j}^{\fl{N}{2}}2^{N-2i}d^{-i}\binom{i}{\fl{N}{2}-j}(-1)^{\fl{N}{2}-j}c_i.\\
    &=\sum_{m=0}^{j}2^{2m+t}d^{-(\fl{N}{2}-m)}\binom{\fl{N}{2}-m}{\fl{N}{2}-j}(-1)^{\fl{N}{2}-j}c_{\fl{N}{2}-m},
\end{aligned}
\end{equation}
for $0\leq j\leq \fl{N}{2}$, which implies that $c_i$ is a linear combination of the coefficients $b_{j}$ for $0\leq j\leq \fl{N}{2}-i$, i.e.
\begin{equation*}
    c_{i}=\sum_{j=0}^{\fl{N}{2}-i}\beta_{ij}b_{j}, \quad 0\leq i\leq \fl{N}{2}.
\end{equation*}
We can give the concrete expression of $c_i$ under $b_j$.


\begin{lemma}\label{bij}
$c_{i}=(-1)^{i}2^{2i-N}d^{i}\sum_{j=0}^{\fl{N}{2}-i}\binom{\fl{N}{2}-j}{i}b_{j}$, for $0\leq i\leq\fl{N}{2}$.
\end{lemma}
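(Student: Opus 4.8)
The plan is to invert the linear relation in Eq.~\eqref{eq:newbj} explicitly. Recall that Eq.~\eqref{eq:newbj} expresses each $b_j$ as a linear combination of the $c_i$ with $\fl{N}{2}-j\le i\le \fl{N}{2}$; reindexing by $m=\fl{N}{2}-i$ (so $i=\fl{N}{2}-m$) turns this into a system that is \emph{triangular} in the variables $c_{\fl{N}{2}},c_{\fl{N}{2}-1},\dots,c_0$ versus $b_0,b_1,\dots,b_{\fl{N}{2}}$. Concretely, from the last line of Eq.~\eqref{eq:newbj},
\[
b_j=\sum_{m=0}^{j}2^{2m+t}d^{-(\fl{N}{2}-m)}(-1)^{\fl{N}{2}-j}\binom{\fl{N}{2}-m}{\fl{N}{2}-j}c_{\fl{N}{2}-m},
\]
so the coefficient of $c_{\fl{N}{2}}$ (the $m=0$ term) in $b_j$ is $2^{t}d^{-\fl{N}{2}}(-1)^{\fl{N}{2}-j}\binom{\fl{N}{2}}{\fl{N}{2}-j}$, which is nonzero; hence the system is genuinely triangular and invertible, and $c_i$ is indeed a combination of $b_0,\dots,b_{\fl{N}{2}-i}$, matching the statement just before the lemma.

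First I would substitute the claimed formula $c_i=(-1)^i 2^{2i-N}d^i\sum_{j=0}^{\fl{N}{2}-i}\binom{\fl{N}{2}-j}{i}b_j$ back into the right-hand side of the expression for $b_j$ in Eq.~\eqref{eq:newbj} and check that everything collapses to $b_j$. After pulling out signs and powers of $2$ and $d$, the coefficient of $b_\ell$ (for $0\le\ell\le j$) in the resulting double sum is, up to an overall factor, $\sum_{m} (\pm) 2^{\,2m+t} d^{\,-(\fl{N}{2}-m)}\binom{\fl{N}{2}-m}{\fl{N}{2}-j}\cdot(-1)^{\fl{N}{2}-m}2^{2(\fl{N}{2}-m)-N}d^{\fl{N}{2}-m}\binom{\fl{N}{2}-\ell}{\fl{N}{2}-m}$, where $m$ runs over the overlap of the two summation ranges, i.e. $\fl{N}{2}-j\le \fl{N}{2}-m$ and $\ell\le m$ — equivalently $\ell\le m\le j$. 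The powers of $2$ combine to $2^{2m+t+N-2m-N}=2^{t}$, a constant; the powers of $d$ cancel entirely; and the alternating signs combine to a fixed sign times $(-1)^{m}$ or $(-1)^{\ell}$ depending on bookkeeping. So the whole identity reduces to showing
\[
\sum_{m=\ell}^{j}(-1)^{m}\binom{\fl{N}{2}-m}{\fl{N}{2}-j}\binom{\fl{N}{2}-\ell}{\fl{N}{2}-m}=(\text{sign})\,\delta_{j\ell}.
\]
This is a standard alternating binomial identity: setting $a=\fl{N}{2}-\ell$, $b=\fl{N}{2}-j$, $r=\fl{N}{2}-m$ it becomes $\sum_{r=b}^{a}(-1)^{r}\binom{r}{b}\binom{a}{r}=(-1)^{b}\binom{a}{b}\sum_{s=0}^{a-b}(-1)^{s}\binom{a-b}{s}=(-1)^{b}\binom{a}{b}\,[a=b]$, which is exactly $\delta_{j\ell}$ up to the sign $(-1)^{b}$ — and one checks that this sign is precisely what is needed to make the $m=j=\ell$ term reproduce $b_j$ with coefficient $1$. (For the diagonal term $j=\ell$, only $m=\ell$ survives and $\binom{\fl{N}{2}-\ell}{\fl{N}{2}-\ell}=1$, so the check reduces to matching a single power of $2$, power of $d$, and sign against the $m=0$, $i=\fl{N}{2}-\ell$ contribution — a one-line verification.)

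Alternatively, and perhaps more cleanly, I would note that Eq.~\eqref{eq:newbj} is really a finite-difference/binomial-transform relation: writing $\tilde c_m := 2^{-2m}d^{\fl{N}{2}-m}c_{\fl{N}{2}-m}$ and $\tilde b_j:=(-1)^{\fl{N}{2}-j}2^{-t}b_j$ one gets $\tilde b_j=\sum_{m=0}^{j}\binom{\fl{N}{2}-m}{\fl{N}{2}-j}\tilde c_m$, whose inversion is the classical identity with a single alternating binomial sum, and then unwind the substitutions. Either way the mechanics are routine; the only real care needed is in tracking the three nuisance factors — the sign $(-1)^{\bullet}$, the power of $2$ (which should end up constant since $t$ is fixed), and the power of $d$ (which should cancel) — and confirming the ranges of summation match up after swapping the two sums. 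The main obstacle, such as it is, is purely this index- and sign-bookkeeping; there is no conceptual difficulty, since invertibility of the triangular system is automatic and the combinatorial identity involved is entirely standard. I would double-check the final formula against a small case, e.g. $N=4$ or $N=5$, where $\fl{N}{2}=2$ and the $3\times 3$ system can be inverted by hand, to catch any stray sign.
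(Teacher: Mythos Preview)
Your approach is correct and essentially the same as the paper's: the paper also writes the relation \eqref{eq:newbj} as a lower-triangular linear system, posits the inverse matrix explicitly, and verifies it by the identical binomial identity $\binom{a}{r}\binom{r}{b}=\binom{a}{b}\binom{a-b}{r-b}$ followed by $\sum_{s}(-1)^s\binom{a-b}{s}=[a=b]$. The only cosmetic difference is that the paper works with the matrix entries $A_{\ell j}$ and $(A^{-1})_{\ell j}$ directly rather than phrasing it as ``substitute the claimed $c_i$ back into the formula for $b_j$,'' but the computation and the key identity are the same.
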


The proof of Lemma~\ref{bij} is given in Appendix \ref{appendix:bij}. Now, we have the following result.

\begin{lemma}\label{lemma:main_method}
 For a $k$-uniform state in $(\bbC^{d})^{\otimes N}$, we have $k\leq i-1$ if $(-1)^i\alpha_{i}(N)<0$ with $1\leq i\leq \fl{N}{2}$.
\end{lemma}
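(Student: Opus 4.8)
The plan is to exploit the linear relation \eqref{eq:alpha_i} between the $c_i$ and the Shor--Laflamme coefficients $a_j$, together with the $k$-uniformity constraint \eqref{eq1} and the sign constraint coming from the shadow enumerator via Lemma~\ref{bij}. Suppose $\ket{\psi}$ is a $k$-uniform state in $(\bbC^d)^{\otimes N}$ and fix an index $i$ with $1\leq i\leq\fl{N}{2}$ such that $(-1)^i\alpha_i(N)<0$. The key observation is that the claimed conclusion $k\leq i-1$ is equivalent to ruling out $k\geq i$, so I argue by contradiction: assume $k\geq i$. Then by \eqref{eq1} we have $a_1=a_2=\cdots=a_i=0$ (since $1\leq i\leq k$), so the linear combination in \eqref{eq:alpha_i}, namely $c_i=\sum_{j=0}^{i}\alpha_{ij}a_j$, collapses to its $j=0$ term, giving $c_i=\alpha_{i0}a_0=\alpha_i(N)$ because $a_0=1$ by \eqref{eq2}.

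Next I bring in the shadow side. By Lemma~\ref{bij},
\begin{equation*}
c_i=(-1)^i 2^{2i-N}d^i\sum_{j=0}^{\fl{N}{2}-i}\binom{\fl{N}{2}-j}{i}b_j,
\end{equation*}
and since every $b_j=s_{2j+t}\geq 0$ by \eqref{eq:s_jgeq} while each binomial coefficient $\binom{\fl{N}{2}-j}{i}$ and the factor $2^{2i-N}d^i$ are nonnegative, the entire sum is nonnegative; hence $(-1)^i c_i\geq 0$, i.e. $(-1)^i c_i = (-1)^i\alpha_i(N)\geq 0$. This directly contradicts the hypothesis $(-1)^i\alpha_i(N)<0$. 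Therefore the assumption $k\geq i$ is impossible, so $k\leq i-1$, as claimed.

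I do not expect any serious obstacle here: the argument is essentially a sign-tracking bookkeeping exercise once Lemmas~\ref{alphain} and~\ref{bij} are in hand. The only point requiring a little care is making sure the collapse of \eqref{eq:alpha_i} is legitimate, i.e. that the range $0\leq j\leq i$ of the sum defining $c_i$ really is killed except for $j=0$ under $k\geq i$ --- this is exactly \eqref{eq1} with the index bound $i\leq k$, so it is automatic. A secondary point is confirming the nonnegativity of all factors in the Lemma~\ref{bij} expansion for $c_i$ (the power of $2$, the power of $d$, and the binomials), which is immediate since $d\geq 2$, $0\leq i\leq\fl{N}{2}$, and $\fl{N}{2}-j\geq i$ throughout the summation range. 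So the real content of the theorem is pushed into the explicit formula for $\alpha_i(N)$ in Lemma~\ref{alphain} and the combinatorial identity behind Lemma~\ref{bij}; Lemma~\ref{lemma:main_method} itself is just the clean interface that converts ``$(-1)^i\alpha_i(N)<0$'' into a bound on $k$, and it is this interface that will be fed (for $d=3,4,5$) into the Fast Zeilberger Package to locate the improving indices $i$.
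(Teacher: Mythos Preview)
Your argument is correct and follows essentially the same route as the paper's own proof: assume $k\geq i$, use \eqref{eq1} and \eqref{eq2} to collapse \eqref{eq:alpha_i} to $c_i=\alpha_i(N)$, and then invoke Lemma~\ref{bij} together with $b_j\geq 0$ to force $(-1)^i c_i\geq 0$, contradicting the hypothesis. The only cosmetic difference is that you package the sign constraint uniformly as $(-1)^i c_i\geq 0$, whereas the paper splits into the odd/even cases for $i$.
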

\begin{proof}
    If $k\geq i$, then $a_0=1$ and $a_1=a_2=\cdots=a_i=0$ by Eqs.~\eqref{eq2} and \eqref{eq1}. According to Eq.~\eqref{eq:alpha_i},
    \begin{equation*}
     c_{i}=\sum_{j=0}^{i}\alpha_{ij}a_{j}=\alpha_{i0}=\alpha_{i}(N).
    \end{equation*}
By Lemma~\ref{bij}, we know that $c_i\leq 0$ when $i$ is odd; $c_i\geq 0$ when $i$ is even.  Therefore, if  $\alpha_{i}(N)>0$ when $i$ is odd, then $c_i=\alpha_{i}(N)>0$, which is impossible. This implies $k\leq i-1$. Similarly,  if  $\alpha_{i}(N)<0$ when $i$ is even, then $k\leq i-1$.
\end{proof}
\vspace{0.4cm}


So we just calculate $\alpha_{i}(N)$ through Lemma~\ref{alphain} to obtain  upper bounds on the parameter $k$ for the existence of  $k$-uniform states in $(\bbC^{d})^{\otimes N}$, where $1\leq i \leq \fl{N}{2}$.
 When $d=2$,  Rains' bound can be obtained by this method, and see \cite{rains1998shadow,rains1999quantum} for more details.
For $d=3$ and small $N$, we obtain Table~\ref{D3}. Note that the  bounds for $N=14,19,23$ are obtained by the non-existence of AME states in $(\bbC^{d})^{\otimes N}$ \cite{huber2018bounds}. For large $N$, we have the following bound. Here $[a,b]$ denotes the set of integers $\{a,a+1,\ldots, b\}$ for any integers $a\leq b$.
\begin{theorem}\label{D3bound}
	 For a $k$-uniform state in $(\bbC^3)^{\otimes N}$,
\begin{equation*}
 k\leq\left\{
\begin{array}{lll}
6m-1,  &   &N\in [14m-4,14m-1];\\
6m+1,&   &N\in [14m,14m+4];\\
6m+3,&   &N\in [14m+5,14m+9],
\end{array}
\right.
\end{equation*}
where $m\geq 1$, except for $N=23,37,51$.
\end{theorem}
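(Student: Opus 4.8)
The plan is to apply Lemma~\ref{lemma:main_method} with an index $i$ that grows linearly with $N$. In the ranges $[14m-4,14m-1]$, $[14m,14m+4]$, $[14m+5,14m+9]$ the claimed bounds $6m-1$, $6m+1$, $6m+3$ are precisely $i-1$ for $i=6m$, $i=6m+2$, $i=6m+4$ respectively, and in all three cases $i$ is \emph{even}; so by Lemma~\ref{lemma:main_method} it suffices to prove $\alpha_i(N)<0$ whenever $i\le\fl{N}{2}$. The only $N$ in these ranges with $i>\fl{N}{2}$ are $N\in\{10,11,14,15,19\}$ (all at $m=1$), and for each of them the stated bound equals the trivial bound $k\le\fl{N}{2}$, so there is nothing to do (these small cases are in any event covered by Table~\ref{D3}). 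Setting $d=3$ in Lemma~\ref{alphain} gives $\alpha_i(N)=-\tfrac{2N}{i}f_i(N)$, where
\[
 f_i(N)=\sum_{j=0}^{i-1}(-2)^{j}\binom{N-2i+j}{N-2i}\binom{2i-2-j}{i-1}=[t^{\,i-1}]\,\frac{1}{(1+2t)^{N-2i+1}(1-t)^{i}},
\]
the second equality obtained by convolving $(1+2t)^{-(N-2i+1)}$ with $(1-t)^{-i}$. Since $i\le\fl{N}{2}$ forces $N-2i+1\ge1$, $f_i(N)$ is a genuine power-series coefficient, and the theorem reduces to showing $f_i(N)>0$ for the pairs $(N,i)$ above.

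For a fixed residue $r$, write $N=14m+r$ and substitute the matching $i\in\{6m,6m+2,6m+4\}$; using $(1+2t)^{2m}(1-t)^{6m}=\bigl((1+2t)(1-t)^{3}\bigr)^{2m}$ this becomes $f_i(N)=[t^{\,6m+s}]\,R_r(t)\bigl((1+2t)(1-t)^{3}\bigr)^{-2m}$ with $s\in\{-1,1,3\}$ and $R_r$ an explicit ratio of low powers of $1+2t$ and $1-t$. In particular the summand of $f_i(N)$ is a proper hypergeometric term in $(m,j)$, so the Fast Zeilberger Package returns a homogeneous linear recurrence $\sum_{\ell=0}^{L}p_\ell(m)\,g_r(m+\ell)=0$, $p_\ell\in\bbZ[m]$, $p_L\not\equiv0$, for $g_r(m):=f_{i(m)}(N(m))$. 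Beyond the largest real zero of $p_L$ the recurrence is nondegenerate, so $g_r$ is pinned down by finitely many initial values; I would compute these, check positivity on the admissible range of $m$, and then propagate it to all larger $m$ — immediately if $L=1$ and $-p_0/p_L$ is eventually positive, otherwise by an induction that carries a crude positive lower bound for $g_r$ through the recurrence (equivalently, singularity analysis of this diagonal coefficient gives $g_r(m)\sim C\omega^{m}m^{\gamma}$ with $C,\omega>0$, reducing matters to locating sign changes). The one residue where the initial segment is not entirely positive is $r\equiv 9\pmod{14}$: there $g_9(1),g_9(2),g_9(3)$ are the values at $N=23,37,51$ and fail to be positive, whereas $g_9(m)>0$ for $m\ge4$ — which is exactly why those three $N$ are excluded. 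Translating $f_i(N)>0\Rightarrow\alpha_i(N)<0\Rightarrow k\le i-1$ via Lemma~\ref{lemma:main_method} then assembles all cases.

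I expect the real difficulty to be this last positivity step: a Zeilberger recurrence by itself certifies no sign, so turning the finitely many computed values into a proof that $g_r(m)>0$ for \emph{all} large $m$ needs extra input — bringing each recurrence to a first-order form with a manifestly positive coefficient ratio, or maintaining an explicit lower bound along the induction, or doing the singularity analysis carefully enough to exclude further sign changes — and keeping the fourteen residue classes, their valid $m$-ranges, and the three genuine exceptions mutually consistent is most of the labour. A minor but necessary check is that $i(m)\le\fl{N}{2}$ on the stated ranges for $m\ge2$ and that the handful of $m=1$ failures are exactly the cases where the bound is already trivial.
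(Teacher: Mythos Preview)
Your plan is correct and coincides with the paper's: reduce via Lemma~\ref{lemma:main_method} at the even index $i$ to positivity of the sum in Lemma~\ref{alphain}, then certify that positivity residue-by-residue with the Fast Zeilberger Package (your bookkeeping on the trivial small-$N$ cases and on the $r=9$ exceptions at $m=1,2,3$ is right, and in fact slightly more careful than the paper's). The difficulty you flag evaporates in practice: in every one of the fourteen residue classes the package returns a \emph{second-order} recurrence $h_{n+2}p_{n+2}=h_{n+1}p_{n+1}+h_{n}p_{n}$ whose three polynomial coefficients $h_n,h_{n+1},h_{n+2}$ are all positive past a small threshold, so two positive initial values propagate positivity to all larger $n$ immediately---no lower-bound tracking or singularity analysis is needed.
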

\begin{proof} For each $N\in [14m-4,14m-1]$, it suffices to show that $\alpha_{6m}(14m+\ell)<0$ for all $m\geq 1$ and $-4\leq \ell \leq -1$.
The other two cases are the same. When $\ell=-1$, according to Lemma~\ref{alphain}, we have 
\[\alpha_{6m}(14m-1)=-\left(\frac{14m-1}{3m}\right)\sum_{i=0}^{6m-1}(-2)^{i}\binom{i+2m-1}{2m-1}\binom{12m-2-i}{6m-1}.\]
Let $n=2m$, and
\begin{equation*}
p_n=\sum_{i=0}^{3n-1}(-2)^{i}\binom{i+n-1}{n-1}\binom{6n-2-i}{3n-1}.
\end{equation*}
By using the Fast Zeilberger Package in Mathematica \cite{Fastzeil}, we obtain a recursion of $p_n$,
$$h_{n+2}p_{n+2}=h_{n+1}p_{n+1}+h_{n}p_{n},$$
where
\begin{align*}
h_{n+2}=&729 n (1 + n) (1 + 3 n) (2 + 3 n) (4 + 3 n) (5 + 3 n) (148 + 870 n +1247 n^2),\\
h_{n+1}=&48 n (1 + 3 n) (2 + 3 n) (64845 + 611964 n + 2158381 n^2 +3177680 n^3 + 2083331 n^4 + 496306 n^5),\\
h_{n}=&448 (-1 + 7 n) (1 + 7 n) (2 + 7 n) (3 + 7 n) (4 + 7 n) (5 +7 n) (2265 + 3364 n + 1247 n^2).
\end{align*}
Note that $h_n,h_{n+1},h_{n+2}$ are all positive when $n\geq 2$, $p_1=4$ and $p_2=36$. This implies that $p_n>0$ for $n\geq 2$. Thus $\alpha_{6m}(14m-1)<0$ for $m\geq 1$. See Appendix \ref{V111:appen} for the proof of  other cases.  	
\end{proof}
\vspace{0.4cm}

\begin{table}[t]
\caption{Upper bound on the parameter $k$ for the existence of $k$-uniform states in $(\bbC^{3})^{\otimes N}$}
 \centering \label{D3}
\begin{tabular}{|c|c|c|c|c|c|c|c|c|}
	\hline
	$N$                    & 2-3   & 4-5   & 6-8   & 9     & 10-13 & 14    &15-18  & 19    \\ \hline
	$k\leq$              & 1     & 2     & 3     & 4     & 5     & 6     & 7     & 8    \\ \hline
	$N$                    & 20-22 & 23    & 24-27 & 28-32 & 33-36 & 37-41 & 42-46 & 47-50  \\ \hline
	$k\leq$              & 9    & 10    & 11    & 13    & 15    & 17    & 19    & 21       \\ \hline
	$N$                    & 51-55 & 56-60 & 61-65 & 66-69 & 70-74 & 75-79 & 80-83 & 84-88 \\ \hline
	$k\leq$              & 23    & 25    & 27    & 29    & 31    & 33    & 35    & 37    \\ \hline
\end{tabular}
\end{table}


By Theorem~\ref{D3bound}, the upper bound of $k$ is about  $\frac{3}{7}N$ for a  $k$-uniform state in $(\bbC^3)^{\otimes N}$ to exist. This bound is going away from the trivial bound $\fl{N}{2}$  with the increase of $N$. For $d=4,5$ and small $N$,  we list the upper bounds of $k$ in  Tables~\ref{D4} and \ref{D5} by similar arguments. It seems that the bound behaves similarly when $d=4$. However, the computation is so complicated that the Fast Zeilberger Package in Mathematica is not helpful to give a tidy formula as in Theorem~\ref{D3bound}.
Thus, for large $N$, we  have the following conjecture.
\begin{conjecture}\label{conjecture}
\begin{enumerate}[(i)]
\item
	 For a $k$-uniform state in $(\bbC^4)^{\otimes N}$,
\begin{equation*}
 k\leq\left\{
\begin{array}{lll}
8m-5,  &   &N\in [17m-12,17m-9];\\
8m-3,&   &N\in [17m-8,17m-5];\\
8m-1,&   &N\in [17m-4,17m-1];\\
8m+1,&   &N\in [17m,17m+4],
\end{array}
\right.
\end{equation*}
where $m\geq 2$, except for $N=38$.
\item  For a $k$-uniform state in $(\bbC^5)^{\otimes N}$,
\begin{equation*}
\begin{array}{lll}
 k\leq 2m-1,  &   &N\in [4m,4m+3]\\
\end{array}
\end{equation*}
 where $m\geq 45$.
\end{enumerate}

\end{conjecture}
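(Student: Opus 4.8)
\noindent\textbf{A plan for Conjecture~\ref{conjecture}.}
The plan is to run, for both parts, the same machine that produced Theorem~\ref{D3bound}. For each asserted bound $k\le i-1$ we must verify the hypothesis $(-1)^{i}\alpha_{i}(N)<0$ of Lemma~\ref{lemma:main_method}, with $\alpha_{i}(N)$ the explicit alternating double sum of Lemma~\ref{alphain}. In part (i) the relevant index on the four $N$-windows is $i=8m-4,\,8m-2,\,8m,\,8m+2$ in turn, and in part (ii) it is $i=2m$; since $i$ is even in every case the hypothesis collapses to $\alpha_{i}(N)<0$. Writing $N=17m+\ell$ (resp.\ $N=4m+\ell$) with $\ell$ in the appropriate residue window and feeding this into Lemma~\ref{alphain}, the prefactor $-N(d-1)/i$ is negative, so everything reduces to proving the positivity of a one-parameter family of sums
\begin{equation*}
p_{m}=\sum_{j=0}^{i-1}(1-d)^{j}\binom{N-2i+j}{N-2i}\binom{2i-2-j}{i-1},
\end{equation*}
in which $N-2i$ is an affine function of $m$ (equal to $m$, to $m$ up to an additive constant, or to a small constant, depending on $\ell$). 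For large $N$ the sub-windows asserting only $k\le\fl{N}{2}-1$ are already subsumed by Scott's bound (Theorem~\ref{scott}), so the substantive content is precisely the sub-windows asserting $k\le\fl{N}{2}-2$, and it is for those $N$ that $p_{m}>0$ must be proved.

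First I would tabulate $p_{m}$ for small $m$ and each residue $\ell$, both to separate the nontrivial sub-windows from the Singleton-trivial ones and to pin down where $p_{m}$ carries the wrong sign; this is exactly what forces the exclusion of $N=38$ in part (i) and the restriction $m\ge 45$ in part (ii). Next, for each family I would feed the hypergeometric summand to the Fast Zeilberger Package~\cite{Fastzeil} to obtain a linear recurrence $\sum_{\nu=0}^{r}h_{\nu}(m)\,p_{m+\nu}=0$ with polynomial coefficients $h_{\nu}$. If, as happened for $d=2,3$, the $h_{\nu}(m)$ all share a single sign for $m$ beyond a computable point, then $p_{m}>0$ follows by induction once the $r$ base values are checked to be positive, and the finitely many remaining small $N$ are dispatched by direct evaluation of $\alpha_{i}(N)$ together with Scott's bound and the computations of~\cite{huber2018bounds}.

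The hard part is this last dichotomy. For $d=4$, and for the harder residues when $d=5$, the recurrence returned by creative telescoping is of larger order and, crucially, its coefficients $h_{\nu}(m)$ are not sign-definite on the range of interest, so the clean sign-propagation induction used in the proof of Theorem~\ref{D3bound} does not close. Getting past this seems to require genuinely finer input than the bare recurrence: a factorization of the recurrence operator into lower-order factors whose signs can be controlled; a companion positivity statement for an auxiliary sum that dominates the terms of ambiguous sign; or an independent asymptotic estimate of $p_{m}$---extracting the two binomial factors through their ordinary generating functions writes $p_{m}$ as a coefficient of a rational function in two variables, equivalently as a contour integral, to which a saddle-point analysis applies and fixes the large-$m$ sign, provided one also supplies an error bound tight enough to make the threshold ($m\ge 45$ for $d=5$, and the omission of $N=38$ for $d=4$) provably sharp. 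Carrying any of these through uniformly in $\ell$ is what we have not managed, and is why these statements are posed as a conjecture rather than a theorem.
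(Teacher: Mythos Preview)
The paper does not prove Conjecture~\ref{conjecture}; it is stated precisely as an open conjecture, with the authors noting only that ``the computation is so complicated that the Fast Zeilberger Package in Mathematica is not helpful to give a tidy formula as in Theorem~\ref{D3bound}.'' Your plan correctly reconstructs this: the intended attack is the machinery of Lemmas~\ref{alphain} and~\ref{lemma:main_method}, the reduction to positivity of the sums $p_{m}$, and the attempt to certify that positivity via a Zeilberger recurrence with sign-definite coefficients, and your diagnosis of the obstruction---that for $d=4,5$ the returned recurrence fails to have coefficients of a single sign so the induction does not close---is exactly the obstacle the paper alludes to. Your suggested work-arounds (operator factorization, dominating auxiliary sums, saddle-point asymptotics) are reasonable but speculative, which is consistent with the conjecture's status.

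One small correction: your claim that ``the substantive content is precisely the sub-windows asserting $k\le\lfloor N/2\rfloor-2$'' is accurate for part~(ii), where the conjectured bound is always $\lfloor N/2\rfloor-1$ or $\lfloor N/2\rfloor-2$, but it understates part~(i). For $d=4$ the conjectured bound is asymptotic to $8N/17$, so for large $m$ every sub-window asserts something strictly stronger than $\lfloor N/2\rfloor-2$, and none of it is subsumed by Scott's bound. This does not affect the plan itself, since Lemma~\ref{lemma:main_method} would handle all sub-windows uniformly.
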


\begin{table}[t]
\caption{Upper bound on the parameter $k$ for the existence of $k$-uniform states in $(\bbC^{4})^{\otimes N}$}
\centering \label{D4}
\begin{tabular}{|c|c|c|c|c|c|c|c|c|}
	\hline
	$N$   & 60-63  &64-67  &68-72  &73-76  &77-80 &81-84  &85-89  &90-93  \\ \hline
	$k\leq$   &29  &31  &33  &35  &37   &39  &41   &43      \\ \hline
	$N$   &94-97&98-102&103-106&107-110 &111-114 &115-119 &120-123&124-127\\ \hline
	$k\leq$ &45    &47    &49   &51   &53    &55   &57   &59    \\ \hline
	$N$   &128-131&132-136&137-140&141-144&145-149&150-153&154-157&158-161 \\ \hline
	$k\leq$ &61 &63  &65   &67  &69   &71    &73     &75    \\ \hline
\end{tabular}
\end{table}

\begin{table}[t]
\caption{Upper bound on the parameter $k$ for the existence of $k$-uniform states in $(\bbC^{5})^{\otimes N}$}
\centering \label{D5}
\begin{tabular}{|c|c|c|c|c|c|c|c|c|}
	\hline
	$N$ &180-183 &184-187 &188-191 &192-195 &196-199 &200-203 &204-207 &208-211\\ \hline
	$k\leq$ &89  &91   &93   &95    &97  &99     &101     &103     \\ \hline
	$N$  &212-215 &216-219 & 220-223 &224-228 & 229-232 & 233-236 &237-240 & 241-244  \\ \hline
	$k\leq$   &105  &107  &109    &111   &113   & 115 &117  &119   \\ \hline
	$N$  & 245-248 & 249-252 & 253-256 & 257-260 &261-264 &265-268 &269-272 &273-276 \\ \hline
	$k\leq$   &121  &123   &125   &127  &129  &131  &133 &135    \\ \hline
\end{tabular}
\end{table}

  Note that our bounds are tighter than Scott's bound in Theorem~\ref{scott} when $d=3,4,5$.  For example, for a  $k$-uniform state in $(\bbC^{3})^{\otimes 88}$, $k\leq 37$ from Table~\ref{D3} while $k\leq 43$ from Scott's bound; for a  $k$-uniform state in $(\bbC^{4})^{\otimes 161}$, $k\leq 75$ from Table~\ref{D4} while $k\leq 79$ from Scott's bound; for  a  $k$-uniform state in $(\bbC^{5})^{\otimes 276}$, $k\leq 135$ from Table~\ref{D5} while $k\leq 137$ from Scott's bound.

\section{non-existence of AME states in heterogeneous systems}\label{sec:AME}
In this section, we generalize Scott's bound to heterogeneous systems whose local dimensions are not all equal, and give some non-existence results of  AME states in heterogeneous systems. We also compare the results obtained by the generalized Scott's bound and shadow inequality.

Note that there are two definitions of AME states in heterogeneous systems $\bbC^{d_1}\otimes\bbC^{d_2}\otimes\cdots\otimes\bbC^{d_N}$. The first one requires the reduction to any $\fl{N}{2}$ parties is maximally mixed \cite{goyeneche2016multipartite}, while the second one requires  every subsystem whose dimension is not larger than
that of its complement must be maximally mixed \cite{huber2018bounds}. We only focus on the first one in this paper.
 According to the Schmidt decomposition of bipartite pure states  \cite{Nielsen2011Quantum},  an AME state in $\bbC^{d_1}\otimes\bbC^{d_2}\otimes\cdots\otimes\bbC^{d_N}$  must satisfy:
for any  $\fl{N}{2}$-subset  $A$  of $[N]$,   $\prod_{i\in A}d_i\leq \prod_{i\in A^c}d_i$.
Specially, if $d_1, d_2,\ldots, d_N$ are not all equal, then  $N$ must be odd.
Next, we  generalize Scott's bound to heterogeneous systems.
\begin{theorem}[Generalized Scott's bound]\label{thm:generalize_scott}
AME states  in $\bbC^{d_1}\otimes \bbC^{d_2}\otimes \cdots \otimes \bbC^{d_N}$ do not exist if there exists a  $(\fl{N}{2}+2)$-subset $A$ of $[N]$, such that
\begin{equation}\label{eq:generalize}
    \frac{\prod_{i\in A}d_i^2}{d_1d_2\cdots d_N}\left(1-\sum_{i\in A}\frac{1}{d_i^2}\right)+\fl{N}{2}+1<0.
\end{equation}
\end{theorem}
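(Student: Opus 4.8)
The plan is to argue by contradiction using a support‑refined form of the nonnegativity of the Shor--Laflamme weights. Suppose an AME state $\ket{\psi}$ exists in $\bbC^{d_1}\otimes\cdots\otimes\bbC^{d_N}$, put $\rho=\proj{\psi}$, and let $A$ be a $(\fl{N}{2}+2)$-subset of $[N]$ for which Eq.~\eqref{eq:generalize} holds. For each party $i$ fix a nice error basis $\{e^{(i)}_j\}_{j\in\bbZ_{d_i^2}}$ on $\bbC^{d_i}$ with $e^{(i)}_0=I$ and $\tr(e^{(i)\dagger}_j e^{(i)}_\ell)=\delta_{j\ell}d_i$, and for $\alpha\in\prod_i\bbZ_{d_i^2}$ set $E_\alpha=\bigotimes_{i=1}^N e^{(i)}_{\alpha_i}$, with $\supp(E_\alpha)$ and $\wt(E_\alpha)$ defined as in Section~\ref{pre}. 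The objects I would work with are the \emph{sector weights} $W_S:=\sum_{\alpha:\,\supp(E_\alpha)=S}|\tr(E_\alpha^{\dagger}\rho)|^2$; these are manifestly nonnegative and refine the Shor--Laflamme weights via $a_j=\sum_{|S|=j}W_S$. The goal is to show that for an AME state $W_A$ equals exactly the left-hand side of Eq.~\eqref{eq:generalize}, so that $W_A<0$ — a contradiction.

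The first step is an expansion identity. Writing $\rho_S$ in the operator basis $\{E^S_\beta=\bigotimes_{i\in S}e^{(i)}_{\beta_i}\}$ on the parties of $S$, using $\tr(E^{S\dagger}_\beta\rho_S)=\tr\big((E^S_\beta\otimes I_{S^c})^{\dagger}\rho\big)$, and noting that the operators $E^S_\beta\otimes I_{S^c}$ are precisely the $E_\alpha$ with $\supp(E_\alpha)\subseteq S$, one obtains $\big(\prod_{i\in S}d_i\big)\tr(\rho_S^2)=\sum_{S'\subseteq S}W_{S'}$ for every $S\subseteq[N]$. Möbius inversion over the Boolean lattice then gives
\[
W_S=\sum_{S'\subseteq S}(-1)^{|S|-|S'|}\Big(\prod_{i\in S'}d_i\Big)\tr(\rho_{S'}^2).
\]

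Next I would plug the AME values into this formula with $S=A$. For $|S'|\leq\fl{N}{2}$ the state $\rho_{S'}$ is maximally mixed, so $\big(\prod_{i\in S'}d_i\big)\tr(\rho_{S'}^2)=1$, and these terms contribute $\sum_{j=0}^{\fl{N}{2}}(-1)^{\fl{N}{2}+2-j}\binom{\fl{N}{2}+2}{j}$, which equals $\fl{N}{2}+1$ since $\sum_{j=0}^{\fl{N}{2}+2}(-1)^{\fl{N}{2}+2-j}\binom{\fl{N}{2}+2}{j}=(1-1)^{\fl{N}{2}+2}=0$ while the $j=\fl{N}{2}+1$ and $j=\fl{N}{2}+2$ terms are $-(\fl{N}{2}+2)$ and $1$. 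The only remaining subsets of $A$ have size $\fl{N}{2}+1$ or $\fl{N}{2}+2$, and for each of them $|{S'}^c|\leq\fl{N}{2}$, so purity of $\ket{\psi}$ gives $\tr(\rho_{S'}^2)=\tr(\rho_{{S'}^c}^2)=\big(\prod_{i\in {S'}^c}d_i\big)^{-1}$. A direct computation then shows the $(\fl{N}{2}+1)$-subsets $A\setminus\{k\}$ contribute $-\frac{\prod_{i\in A}d_i}{\prod_{i\in A^c}d_i}\sum_{k\in A}d_k^{-2}$ and $S'=A$ contributes $+\frac{\prod_{i\in A}d_i}{\prod_{i\in A^c}d_i}$; using $\frac{\prod_{i\in A}d_i}{\prod_{i\in A^c}d_i}=\frac{\prod_{i\in A}d_i^2}{d_1d_2\cdots d_N}$ one arrives at
\[
0\leq W_A=\fl{N}{2}+1+\frac{\prod_{i\in A}d_i^2}{d_1d_2\cdots d_N}\Big(1-\sum_{i\in A}\frac{1}{d_i^2}\Big),
\]
which contradicts Eq.~\eqref{eq:generalize}; hence no AME state exists. (When all $d_i$ are equal this recovers Scott's bound, so the theorem is a genuine generalization.)

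The step I expect to need the most care is realizing that, for an AME state, the purities of the \emph{over-half} subsets $S'\subseteq A$ of sizes $\fl{N}{2}+1$ and $\fl{N}{2}+2$ are still completely pinned down — not by $k$-uniformity but by the global purity relation $\tr(\rho_{S'}^2)=\tr(\rho_{{S'}^c}^2)$ together with $|{S'}^c|\leq\fl{N}{2}$ — which is exactly what turns $W_A$ into a closed expression in the $d_i$; taking $|A|=\fl{N}{2}+2$ (rather than $\fl{N}{2}+1$, which only reproduces the trivial bound $k\le\fl N2$) is precisely what makes these two extra layers of terms appear and yields a strictly stronger constraint. Beyond that the work is elementary — the Möbius bookkeeping and the identity $\sum_j(-1)^{n-j}\binom{n}{j}=0$. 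I would also note that, although $N$ must be odd when the $d_i$ are not all equal, the computation above needs no parity assumption, so no case distinction is required.
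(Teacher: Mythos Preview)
Your proof is correct and is essentially the paper's argument, repackaged. Both proofs hinge on the same inequality --- the nonnegativity of the sector weight $W_A=\sum_{\supp(E_\alpha)=A}|\tr(E_\alpha^\dagger\rho)|^2$ (equivalently $\sum_{\supp(E_\alpha)=A}r_\alpha^2\ge 0$ in the paper's Bloch coordinates) --- and both evaluate $W_A$ in closed form using the AME property for the small subsets and global purity for the two over-half layers. The organizational difference is this: the paper first kills all contributions with $\wt(E_\alpha)\le\lfloor N/2\rfloor$ via the AME property, computes the cumulative sums $\sum_{\lfloor N/2\rfloor+1\le \wt\le k,\ \supp\subset A}r_\alpha^2$ from the projector identity $\rho_A^2=\frac{\prod_{i\in A}d_i}{\prod_i d_i}\rho_A$, and then subtracts the $(\lfloor N/2\rfloor+1)$-layer from the $(\lfloor N/2\rfloor+2)$-layer; you instead do a full M\"obius inversion over all subsets of $A$ and let the small-subset terms collapse via the binomial identity to $\lfloor N/2\rfloor+1$. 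Your route is slightly more systematic (the identity $\big(\prod_{i\in S}d_i\big)\tr(\rho_S^2)=\sum_{S'\subseteq S}W_{S'}$ and its M\"obius inverse are reusable for arbitrary $|A|$), while the paper's route avoids tracking the many identically-equal-to-$1$ terms by eliminating them up front; the content is the same.
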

\begin{proof}
The proof is motivated by the new proof of Scott's bound in the Supplemental material of \cite{huber2017absolutely}. For completeness, we present every detail here. Firstly,  we  choose an orthonormal basis $\{e_j\}_{j\in\bbZ_{d_i^2}}$ ($e_0=\bbI$) of Hermitian operators (e.g. the generalized Gell-Mann matrix basis \cite{bertlmann2008bloch}) acting on $\bbC^{d_i}$ for $1\leq i\leq N$, which satisfies $\tr(e_{j}e_{j'})=\delta_{jj'}d_i$, and $\tr(e_j)=0$ for $j\neq 0$.
For the Hilbert space $\bbC^{d_1}\otimes\bbC^{d_2}\otimes\cdots\otimes\bbC^{d_N}$, a local error basis  $\cE=\{E_\alpha\}$ consists of $E_{\alpha}=e_{\alpha_1}\otimes e_{\alpha_2}\otimes\cdots \otimes e_{\alpha_N}$, where
$\alpha=(\alpha_1,\alpha_2,\cdots,\alpha_N)\in \bbZ_{d_1^2}\times \bbZ_{d_2^2}\times\cdots\times \bbZ_{d_N^2}$,
each $e_{\alpha_i}$ acts on $\bbC^{d_i}$,  $\tr(E_{\alpha}E_{\beta})=\delta_{\alpha\beta}d_1d_2\cdots d_n$, and $\tr(E_{\alpha})=0$ for $1\leq \wt(E_{\alpha})\leq N$.
For an AME state $\ket{\psi}$ in $\bbC^{d_1}\otimes\bbC^{d_2}\otimes\cdots\otimes\bbC^{d_N}$, let $\rho=\ketbra{\psi}{\psi}$, then  $\rho$ has the Bloch representation \cite{eltschka2015monogamy},
\begin{equation*}
    \rho=\frac{1}{d_1d_2\cdots d_N}\sum_{E_{\alpha}\in \cE}r_{\alpha}E_{\alpha},
\end{equation*}
where $r_{\alpha}=\tr(E_{\alpha}\rho)\in \bbR$. When $1\leq \wt(E_{\alpha})\leq \fl{N}{2}$, $\tr(E_{\alpha}\rho)=\tr(E_{\alpha}^{\supp}\rho_{\supp(E_{\alpha})})=\frac{1}{\prod_{i\in \supp(E_{\alpha})}d_i}\tr(E_{\alpha}^{\supp})=0$  since the reduction to any $\fl{N}{2}$ parties is maximally mixed. Then
\begin{equation*}
    \rho= \frac{1}{d_1d_2\cdots d_N}\left(\bbI+\sum_{\wt(E_{\alpha})\geq \fl{N}{2}+1}r_{\alpha}E_{\alpha}\right).
\end{equation*}
Let $A=\{i_1,i_2,\ldots,i_{k}\}$ be a $k$-subset of $[N]$ with $k\geq \fl{N}{2}+1$, then
\begin{equation}\label{eq:rho_A}
    \rho_{A}=\frac{1}{d_{i_1}d_{i_2}\cdots d_{i_k}}\left(\bbI+\sum_{\fl{N}{2}+1\leq \wt(E_\alpha)\leq k,   \supp(E_{\alpha})\subset A}r_{\alpha}E_{\alpha}^{A}\right).
\end{equation}
Note that Eq.~\eqref{eq:rho_A} is true even if we only require the reduction to any $\fl{N}{2}$-subset of $A$ is maximally mixed.
Since $\tr(E_\alpha^{A}E_\beta^{A})=\delta_{\alpha\beta}d_{i_1}d_{d_2}\cdots d_{i_{k}}$ for any $\supp(E_{\alpha}), \supp(E_{\beta})\subset A$, we obtain
 \begin{equation*}
 \tr(\rho_{A}^2)=\frac{1}{d_{i_1}d_{i_2}\cdots d_{i_k}}\left(1+\sum_{\fl{N}{2}+1\leq \wt(E_\alpha)\leq k,  \supp(E_{\alpha})\subset A}r_{\alpha}^2\right).
 \end{equation*}
Moreover, $\rho_A$ also has the property \cite{huber2017absolutely}:
\begin{equation}\label{eq:rhoA}
   \rho_A^2=\frac{d_{i_1}d_{i_2}\cdots d_{i_k}}{d_1d_2\cdots d_N}\rho_A,
\end{equation}
which is obtained by the property that $\rho_{A^c}$ is maximally mixed.
By Eq.~\eqref{eq:rhoA},
$\tr(\rho_{A}^2)=\frac{d_{i_1}d_{i_2}\cdots d_{i_k}}{d_1d_2\cdots d_N}$. Then we have
\begin{equation*}
 \tr(\rho_{A}^2)=\frac{1}{d_{i_1}d_{i_2}\cdots d_{i_k}}\left(1+\sum_{\fl{N}{2}+1\leq \wt(E_\alpha)\leq k,  \supp(E_{\alpha})\subset A}r_{\alpha}^2\right)=\frac{d_{i_1}d_{i_2}\cdots d_{i_k}}{d_1d_2\cdots d_N}.
\end{equation*}
Hence,
\begin{equation}\label{eq:trrho}
    \sum_{\fl{N}{2}+1\leq \wt(E_\alpha)\leq k,  \supp(E_{\alpha})\subset A}r_{\alpha}^2=\frac{(d_{i_1}d_{i_2}\cdots d_{i_k})^2}{d_1d_2\cdots d_N}-1,
\end{equation}
for any $k$-subset $A=\{i_1,i_2,\ldots,i_{k}\}$ of $[N]$ with $k\geq \fl{N}{2}+1$.

Now let $k=\fl{N}{2}+2$ (i.e. $A=\{i_1,i_2,\ldots,i_{\fl{N}{2}+2}\}$) and $B_j=A\setminus\{i_j\}$ for $1\leq j\leq \fl{N}{2}+2$, then we have
\begin{equation}\label{eq:rgeq}
\begin{aligned}
    \sum_{ \supp(E_{\alpha})=A}r_{\alpha}^2&= \sum_{\fl{N}{2}+1\leq \wt(E_\alpha)\leq \fl{N}{2}+2,  \supp(E_{\alpha})\subset A}r_{\alpha}^2-\sum_{\wt(E_\alpha)= \fl{N}{2}+1,  \supp(E_{\alpha})\subset A}r_{\alpha}^2\\
    &=\sum_{\fl{N}{2}+1\leq \wt(E_\alpha)\leq \fl{N}{2}+2,  \supp(E_{\alpha})\subset A}r_{\alpha}^2-\sum_{j=1}^{\fl{N}{2}+2}\
    \sum_{\supp(E_{\alpha})= B_j}r_{\alpha}^2\geq 0.
    \end{aligned}
\end{equation}
By Eq.~\eqref{eq:trrho} and Eq.~\eqref{eq:rgeq},
we obtain
\begin{equation}
    \left(\frac{\left(d_{i_1}d_{i_2}\cdots d_{i_{\fl{N}{2}+2}}\right)^2}{d_1d_2\cdots d_N}-1\right)
-\sum_{j=1}^{\fl{N}{2}+2}\left(\frac{\left(d_{i_1}d_{i_2}\cdots d_{i_{\fl{N}{2}+2}}\right)^2}{d_1d_2\cdots d_Nd_j^2}-1\right)\geq 0,
\end{equation}
i.e.
\begin{equation}
    \frac{\prod_{i\in A}d_i^2}{d_1d_2\cdots d_N}\left(1-\sum_{i\in A}\frac{1}{d_i^2}\right)+\fl{N}{2}+1\geq 0.
\end{equation}
This completes the proof.
\end{proof}
\vspace{0.4cm}

When $d_1=d_2=\cdots=d_N=d$, Scott's bound in Theorem~\ref{scott} can be obtained by Theorem~\ref{thm:generalize_scott}.
Next, we consider a simple heterogeneous system  $\bbC^{d_1}\otimes (\bbC^{d_2})^{\otimes 2n}$. According to the Schmidt decomposition of bipartite states, $d_1d_2^{n-1}\leq d_2^{n+1}$, i.e. $d_1\leq d_2^2$.
\begin{corollary}\label{cor:AME} Let $d_1\leq d_2^2$. 
  AME states  in $\bbC^{d_1}\otimes (\bbC^{d_2})^{\otimes 2n}$ do not exist if
\begin{enumerate}[(i)]
\item $d_1< d_2$, and $n>\frac{d_2^4-d_1}{d_2^2-d_1}-2$; or 
\item $d_2\leq d_1\leq d_2^2$, and  $n>\frac{d_2^2(d_1+1)}{d_1}-1$.
\end{enumerate}
\end{corollary}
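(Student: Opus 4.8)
The plan is to deduce both statements directly from the Generalized Scott's bound (Theorem~\ref{thm:generalize_scott}) by choosing the $(\fl{N}{2}+2)$-subset $A$ appropriately. Here $N=2n+1$, so $\fl{N}{2}=n$, the required subset $A$ has size $n+2$, and the total dimension is $d_1d_2\cdots d_N=d_1d_2^{2n}$. Up to relabelling there are only two kinds of $(n+2)$-subsets of $[N]$: one that avoids the first party and hence consists of $n+2$ tensor factors equal to $\bbC^{d_2}$, and one that contains the first party together with $n+1$ of the $d_2$-factors. I would use the first choice to prove (i) and the second to prove (ii); the split $d_1<d_2$ versus $d_2\le d_1\le d_2^2$ is precisely the regime where each choice is the sharper one, and in each case it suffices by Theorem~\ref{thm:generalize_scott} to exhibit one $A$ for which Eq.~\eqref{eq:generalize} holds.

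For (i), take $A$ to be any $n+2$ of the $2n$ factors equal to $\bbC^{d_2}$ (this is possible once $n\ge 2$, which is forced by the hypothesis on $n$). Then $\prod_{i\in A}d_i^2=d_2^{2(n+2)}$ and $\sum_{i\in A}d_i^{-2}=(n+2)/d_2^2$, so the left-hand side of Eq.~\eqref{eq:generalize} equals $\frac{d_2^{2n+4}}{d_1d_2^{2n}}(1-\frac{n+2}{d_2^2})+n+1=\frac{1}{d_1}(d_2^4-(n+2)d_2^2+(n+1)d_1)$. Since $d_1<d_2\le d_2^2$ we have $d_2^2-d_1>0$, so this is negative precisely when $n(d_2^2-d_1)>d_2^4-2d_2^2+d_1$, i.e.\ when $n>\frac{d_2^4-2d_2^2+d_1}{d_2^2-d_1}$; as $\frac{d_2^4-2d_2^2+d_1}{d_2^2-d_1}=\frac{d_2^4-d_1}{d_2^2-d_1}-2$, this is exactly the stated bound, and Theorem~\ref{thm:generalize_scott} rules out AME states.

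For (ii), take $A$ to consist of the first party together with any $n+1$ of the $d_2$-factors (possible for every $n\ge 1$). Now $\prod_{i\in A}d_i^2=d_1^2d_2^{2(n+1)}$ and $\sum_{i\in A}d_i^{-2}=d_1^{-2}+(n+1)/d_2^2$, so the left-hand side of Eq.~\eqref{eq:generalize} equals $d_1d_2^2(1-\frac{1}{d_1^2}-\frac{n+1}{d_2^2})+n+1=d_1d_2^2-\frac{d_2^2}{d_1}-(n+1)(d_1-1)$. This is negative precisely when $(n+1)(d_1-1)>d_1d_2^2-\frac{d_2^2}{d_1}=\frac{d_2^2(d_1^2-1)}{d_1}$, and since $d_1\ge 2$ we may divide by $d_1-1>0$ to get $n+1>\frac{d_2^2(d_1+1)}{d_1}$, i.e.\ $n>\frac{d_2^2(d_1+1)}{d_1}-1$; Theorem~\ref{thm:generalize_scott} again applies.

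There is no genuine obstacle here: the argument is a careful but routine specialization of Theorem~\ref{thm:generalize_scott}. The only real decisions are (a) which subset $A$ makes Eq.~\eqref{eq:generalize} sharpest in each dimension regime — the ``avoid party~$1$'' choice for (i) and the ``include party~$1$'' choice for (ii) — and (b) checking that enough factors of the relevant kind exist to form $A$, which is automatic from the stated lower bounds on $n$. It is worth noting that the computation behind (i) actually goes through for every $d_1<d_2^2$ (and degenerates only in the edge case $d_1=d_2^2$, where one is forced to use a type-B subset), so the substance of (ii) is exactly the part that requires the second choice of $A$ rather than the first.
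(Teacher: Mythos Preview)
Your proof is correct and follows essentially the same approach as the paper: both apply Theorem~\ref{thm:generalize_scott} with the two natural choices of $(n+2)$-subset (excluding or including the distinguished party) and simplify Eq.~\eqref{eq:generalize} to obtain the two thresholds. The paper derives both bounds first and then compares them to explain the split into cases (i) and (ii), whereas you assign the appropriate subset to each case up front and verify the inequality directly; this is a minor organisational difference, not a substantive one.
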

\begin{proof}
 Let $A=\{1,2,\cdots, n+2\}$, then by Theorem~\ref{thm:generalize_scott}, we have
\begin{equation*}
    \frac{d_1^2d_2^{2(n+1)}}{d_1d_2^{2n}}\left(1-\frac{1}{d_1^2}-(n+1)\frac{1}{d_2^2}\right)+n+1<0,
\end{equation*}
i.e.
\begin{equation*}
    n>\frac{d_2^2(d_1+1)}{d_1}-1.
\end{equation*}
Let $A=\{2,3,\ldots,n+3\}$, then by Theorem~\ref{thm:generalize_scott}, we have
\begin{equation}\label{eq:hold1}
    \frac{d_2^{2(n+2)}}{d_1d_2^{2n}}\left(1-(n+2)\frac{1}{d_2^2}\right)+n+1<0,
\end{equation}
i.e.
\begin{equation}
   n>\frac{d_2^4-d_1}{d_2^2-d_1}-2,
\end{equation}
where $d_1<d_2^2$.
 By Theorem~\ref{thm:generalize_scott},  AME states  in $\bbC^{d_1}\otimes (\bbC^{d_2})^{\otimes 2n}$ do not exist  if $d_1<d_2^2$, and $n>\min\{\frac{d_2^2(d_1+1)}{d_1}-1, \frac{d_2^4-d_1}{d_2^2-d_1}-2\}$.
Note that when $d_1<d_2$, we have
$\frac{d_2^4-d_1}{d_2^2-d_1}-2< \frac{d_2^2(d_1+1)}{d_1}-1$.
When $d_2\leq d_1< d_2^2$, we have $\frac{d_2^4-d_1}{d_2^2-d_1}-2\geq  \frac{d_2^2(d_1+1)}{d_1}-1$. When $d_1=d_2^2$, Eq.~\eqref{eq:hold1} does not hold, which means that  AME states  in $\bbC^{d_1}\otimes (\bbC^{d_2})^{\otimes 2n}$ do not exist if $d_1=d_2^2$, and  $n> \frac{d_2^2(d_1+1)}{d_1}-1$.
This completes the proof.
\end{proof}
\vspace{0.4cm}

\renewcommand\arraystretch{1.5}
\begin{table}[t]
	\caption{Some non-existence results for AME states in $\bbC^{d_1}\otimes (\bbC^{d_2})^{\otimes 2n}$, where $d_1\leq d_2^2$}
	\centering \label{table:AME}
	\begin{tabular}{|c|c|c|}
		\hline	
		$\bbC^{d_1}\otimes (\bbC^{d_2})^{\otimes 2n}$ & Corollary~\ref{cor:AME}  & Shadow inequality of Eq.~\eqref{eq:shadow}   \\
		\hline
		 $3\leq d_1\leq 4$, and $d_2=2$ & $n\geq 5$ &$n=4$ \\
		 $d_1=2$, and $d_2=3$ & $n\geq 10$ & $n=6, 8, 9$ \\
    $ d_1= 4$, and $d_2=3$ & $n\geq 11$ & $n=6, 8, 9, 10$\\
		 $5\leq d_1\leq 8$, and $d_2=3$ & $n\geq 10$ & $n=6, 8, 9$\\
        $d_1= 9$, and $d_2=3$ & $n\geq 10$ &  $n=6, 8$\\
    $d_1=2$, and $d_2=4$ & $n\geq 17$ &  $n=10, 12, 14, 16$\\
   $d_1=3$, and $d_2=4$ & $n\geq 18$ &  $n=12, 14, 16$\\
      $d_1=5$, and $d_2=4$ & $n\geq 19$ &  $n=12, 14, 16, 18$\\
         $6\leq d_1\leq 8$, and $d_2=4$ & $n\geq 18$ &  $n=14, 16$\\
    $9\leq d_1\leq 16$, and $d_2=4$ & $n\geq 17$ &  $n=14, 16$\\
		\hline
	\end{tabular}
\end{table}

In \cite{shi2022k}, we have shown some non-existence results of AME states in 9-parties, 11-parties, and 13-parties heterogeneous systems. Now, we can obtain families of non-existence results by Corollary~\ref{cor:AME}.
 See the second column in Table~\ref{table:AME} for some non-existence results of AME states in $\bbC^{d_1}\otimes (\bbC^{d_2})^{\otimes 2n}$ by Corollary~\ref{cor:AME}.

The  shadow inequality of Eq.~\eqref{eq:s_j1} can be also generalized to heterogeneous systems \cite{huber2018bounds,shi2022k}.
 For an AME states in
$\bbC^{d_1}\otimes\bbC^{d_2}\otimes\cdots\otimes\bbC^{d_N}$, $N$ is odd,
the shadow inequality of Eq.~\eqref{eq:s_j1} can be expressed by \cite{huber2018bounds,shi2022k}:
\begin{equation}\label{eq:shadow}
s_{j}=\sum_{k=0}^N\left(\sum_{\alpha=\max\{0, k-j\}}^{\min\{k, N-j\}}(-1)^{\alpha}\binom{N-k}{N-j-\alpha}\binom{k}{\alpha}\right)A_k'\geq 0,
\end{equation}
for all $0\leq j\leq N$,
$A_0'=1$, $A_k'=\sum_{\{i_1,i_2,\ldots,i_k\}\subset\{1,2,\ldots,N\}}\frac{1}{d_{i_1}d_{i_2}\cdots d_{i_k}}$ for $1\leq k\leq \frac{N-1}{2}$, and $A_k'=A_{N-k}'$ for $0\leq k\leq \frac{N-1}{2}$. Now, by using the Fast Zeilberger Package and the shadow inequality,  we can also obtain families of  non-existence results of AME states.
\begin{proposition}\label{thm:AME}
\begin{enumerate}[(i)]
    \item AME states in $\bbC^{d}\otimes(\bbC^{2})^{\otimes 2n}$ do not exist for $d=3,4$ and $n\geq 4$.
    \item AME states in $\bbC^{d}\otimes(\bbC^{3})^{\otimes 2n}$ do not exist for $2\leq d\leq 8$ and $d\neq 3$, $n\geq 6$ and $n\neq 7$; $d=9$,  $n\geq 6$ and $n\neq 7,9$.
\end{enumerate}
\end{proposition}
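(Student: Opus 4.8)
The plan is to combine Corollary~\ref{cor:AME} (which kills the tail of large $n$) with a direct analysis of the shadow inequality Eq.~\eqref{eq:shadow} via the Fast Zeilberger Package \cite{Fastzeil} (which kills the remaining small $n$). For part (i), with $d_2=2$ and $d\in\{3,4\}$, Corollary~\ref{cor:AME}(ii) already excludes AME states in $\bbC^{d}\otimes(\bbC^{2})^{\otimes 2n}$ for every $n\geq 5$, so only the single case $n=4$ must be treated by hand. For part (ii), with $d_2=3$, Corollary~\ref{cor:AME} excludes the tail $n\geq 10$ (resp.\ $n\geq 11$ for $d=4$), leaving the finitely many values $n\in\{6,8,9\}$ (resp.\ $\{6,8,9,10\}$ for $d=4$, and $\{6,8\}$ for $d=9$) to be handled by the shadow inequality; the values $n=7$, and additionally $n=9$ when $d=9$, are reached by neither tool and are exactly the listed exceptions.

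First I would fix $N=2n+1$ and record the normalized weight enumerator coefficients $A_k'$ for $\bbC^{d}\otimes(\bbC^{d_2})^{\otimes 2n}$. Splitting a $k$-subset of $[N]$ according to whether it contains the party of local dimension $d$ gives
\[
A_k'=\frac{1}{d}\cdot\frac{1}{d_2^{\,k-1}}\binom{2n}{k-1}+\frac{1}{d_2^{\,k}}\binom{2n}{k},\qquad 1\leq k\leq n,
\]
together with $A_0'=1$ and $A_k'=A_{N-k}'$ for $k>n$ (recall $N$ is odd, so $A_{(N-1)/2}=A_{(N+1)/2}$). Substituting this into Eq.~\eqref{eq:shadow} and folding the $k$-sum with the palindromic symmetry $A_k'=A_{N-k}'$ yields, for each index $j$, a value $s_j=s_j(n)$ that is a single hypergeometric sum in the summation variable with parameters depending polynomially on $n$.

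Then I would, for each relevant pair $(d,d_2)$, search over the indices $j\in\{0,1,\dots,N\}$ --- in practice over $j=\fl{N}{2}-t$ for small $t$, where the alternating shadow coefficient is most lopsided --- for a choice making $s_j(n)<0$, which contradicts $s_j\geq 0$ and so forbids the AME state. For the finitely many small $n$ above this is a direct arithmetic check. To cover an infinite range of $n$ in one stroke (in particular to reconfirm and, where possible, extend the range of Corollary~\ref{cor:AME}), I would feed $s_j(n)$ with a suitable affine choice $j=j(n)$ into the Fast Zeilberger Package to obtain a linear recurrence $\sum_{\ell}h_\ell(n)\,s_j(n+\ell)=0$ with polynomial coefficients $h_\ell$, exactly as in the proof of Theorem~\ref{D3bound}; verifying that the $h_\ell(n)$ have a fixed sign pattern on the relevant range and checking a handful of initial values then forces $s_j(n)<0$ throughout, contradicting Eq.~\eqref{eq:shadow}.

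The hard part will be the choice of $j$: the shadow coefficient $\sum_{\alpha}(-1)^{\alpha}\binom{N-k}{N-j-\alpha}\binom{k}{\alpha}$ is a signed Krawtchouk-type quantity, so the sign of $s_j(n)$ is delicate and depends sensitively on how the single weight $1/d$ on the odd-dimension party interacts with the $1/d_2$ weights; an unlucky $j$ yields an inequality that simply holds and is useless. A secondary nuisance, already visible for $d=3$ in Theorem~\ref{D3bound}, is that the recurrences produced by the package carry bulky polynomial coefficients, so certifying their positivity over the needed range of $n$ --- and thereby pinning down exactly which small $n$ (namely $n=7$, and also $n=9$ for $d=9$) escape the argument --- takes some bookkeeping. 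For those exceptions one simply notes that no admissible $j$ produces $s_j<0$ and that $n$ lies below the Corollary~\ref{cor:AME} threshold, so the method genuinely leaves them open.
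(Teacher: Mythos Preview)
Your proposal is correct but follows a genuinely different route from the paper. You split the work: Corollary~\ref{cor:AME} (the generalized Scott bound) disposes of the tail $n\ge 5$ in part~(i) and $n\ge 10$ (resp.\ $n\ge 11$ for $d=4$) in part~(ii), after which a finite arithmetic verification of the shadow inequality Eq.~\eqref{eq:shadow} handles the handful of remaining small $n$ listed in Table~\ref{table:AME}. The paper, by contrast, never invokes Corollary~\ref{cor:AME} here; it works entirely with the shadow inequality for \emph{all} $n$. Concretely, it fixes the specific indices $j=1$ (when $n$ is even) and $j=3$ (when $n$ is odd), writes $s_1=2f(n)$ and $s_3=2g(n)$ as explicit hypergeometric sums, and then uses the Fast Zeilberger Package to obtain first-order recurrences of the shape $f(2m+2)=p_m f(2m)-q_m$ with $p_m,q_m>0$, so that a single negative initial value propagates negativity through the whole range. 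Your hybrid approach is shorter and sidesteps the bookkeeping of certifying sign patterns in the recurrence coefficients, at the price of calling on two independent tools; the paper's approach is uniform and, as a byproduct, shows that the shadow inequality \emph{alone} already covers the full range, not just the leftovers from Scott's bound. One minor correction to your heuristic: the indices that go negative are the small ones $j=1,3$, not $j\approx\fl{N}{2}$ --- but since in your scheme only finitely many $n$ require a brute-force search over $j$, this does not affect correctness.
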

\begin{proof}
(i)	If $d=3$, then by Eq.~\eqref{eq:shadow},  we obtain that $s_1=2f(n)$, where
	\begin{equation*}
	f(n)=(2n+1)+\sum_{i=1}^{n}(-1)^{i}(2n+1-2i)\left[\binom{2n}{i}\frac{1}{2^i}+\binom{2n}{i-1}\frac{1}{3\times 2^{i-1}}\right],
	\end{equation*}
	and $s_3=2g(n)$, where
	\begin{equation*}
	\begin{aligned}
	g(n)=&\binom{2n+1}{3}+\sum_{i=1}^{n}\left[(-1)^{i-3}\binom{i}{3}+(-1)^{i-2}(2n+1-i)\binom{i}{2}+(-1)^{i-1}\binom{2n+1-i}{2}i+(-1)^i\binom{2n+1-i}{3}\right]\\
	&\left[\binom{2n}{i}\frac{1}{2^i}+\binom{2n}{i-1}\frac{1}{3\times 2^{i-1}}\right].
	\end{aligned}
	\end{equation*}	
	There are two cases. If $n=2m$, then by using the Fast Zeilberger Package in Mathematica \cite{Fastzeil}, we have
	\begin{equation*}
	f(2m+2)=p_m\cdot f(2m)-q_m,
	\end{equation*}
	where
	\begin{equation*}
	\begin{aligned}
	p_m&=\frac{7+6m}{16 (1 + 6 m)},\\
	q_m&=\frac{(2 + m) (5 + 2 m) (13 + 177 m + 704 m^2 +
		420 m^3)\binom{7 + 4 m}{2 + 2 m} \times2^{-1 - 2 m}}{
		48 (1 + 4 m) (3 + 4 m) (5 + 4 m) (7 + 4 m)  (1 + 6 m)}.
	\end{aligned}
	\end{equation*}
	Since  $p_m>0$, $q_m>0$ when $m\geq 2$, and $f(4)=-\frac{23}{24}<0$, we have $f(2m+2)<0$ for $m\geq 2$.
	Thus
	\begin{equation}\label{eq:S1}
	s_1=2f(n)<0 \quad \text{for} \quad n=2m\geq 4.
	\end{equation}
	
	If $n=2m+1$, then by using the Fast Zeilberger Package in Mathematica \cite{Fastzeil}, we have
	\begin{equation*}
	g(2m+3)=u_m\cdot g(2m+1)-v_m,
	\end{equation*}
	where
	\begin{equation*}
	\begin{aligned}
	u_m&=\frac{(3 + 2 m)^2 (5 + 4 m)}{16 (1 + 2 m)^2 (1 + 4 m)},\\
	v_m&=\frac{(3 + m) (1 + 2 m) (3 + 2 m) (5 + 2 m) (21 + 445 m +
		1024 m^2 + 420 m^3) \binom{9+4m}{3 + 2 m}\times2^{-2 - 2 m}}{
		144 (3 + 4 m) (7 + 4 m) (9 + 4 m)  (1 + 2 m)^2 (1 + 4 m)}.
	\end{aligned}
	\end{equation*}
	Since  $u_m>0$, $v_m>0$ when $m\geq 2$, and $g(5)=-\frac{65}{8}<0$, we have $g(2m+3)<0$ for $m\geq 2$.
	Thus
	\begin{equation}\label{eq:S3}
	s_3=2g(n)<0 \quad \text{for} \quad n=2m+1\geq 5.
	\end{equation}
	According to Eqs.~\eqref{eq:shadow}, \eqref{eq:S1} and \eqref{eq:S3}, we know that
	AME states in $\bbC^3\otimes(\bbC^{2})^{\otimes 2n}$ do not exist for $n\geq 4$.
	
If $d=4$, we can also show that AME states in $\bbC^4\otimes(\bbC^{2})^{\otimes 2n}$ do not exist for $n\geq 4$ for the same discussion as above, see Appendix~\ref{appendix:d4} for details.	

(ii) Please see Appendix~\ref{appendix:d4} for the proof.	
\end{proof}
\vspace{0.4cm}

Proposition~\ref{thm:AME} excludes more parameters for the specified systems. Assisted by computer, we obtain the third column of  Table~\ref{table:AME} by shadow inequality. It seems that Corollary~\ref{cor:AME} can be covered by the shadow inequality if computing complexity is not considered. However, this is not true. 
There exist some non-existence results of AME states that are obtained by Corollary~\ref{cor:AME}, while they cannot be obtained by the shadow inequality. For example, AME states in $\bbC^{2}\otimes (\bbC^{4})^{\otimes 34}$ do not exist by Corollary~\ref{cor:AME}, while the shadow inequality shows that $s_i\geq 0$ for $0\leq i\leq 34$.

In addition to AME states, Theorem~\ref{thm:generalize_scott} can also show some non-existence reuslts of more pure states. From the proof of Theorem~\ref{thm:generalize_scott}, we know that
if  there exists a subset $A=\{i_1,i_2,\cdots,i_{\fl{N}{2}+2}\}\subset[N]$ such that the inequality of Eq.~\eqref{eq:generalize}  holds, then any pure state (not necessarily AME state) in $\bbC^{d_1}\otimes \bbC^{d_2}\otimes \cdots \otimes \bbC^{d_N}$ with the following property does not exist: the reduction to any  $\fl{N}{2}$-subset of $A$ is maximally mixed and the reduction to $A^c\cup\{i_j\}$ is maximally mixed for $1\leq j\leq \fl{N}{2}+2$.
Note that $k$-uniform states can also be generalized to heterogeneous systems \cite{goyeneche2016multipartite,shen2021absolutely,pang2021quantum,shi2022k,feng2023constructions}.
It is difficult to give some upper bounds on the parameter $k$ for the existence of $k$-uniform states in $\bbC^{d_1}\otimes\bbC^{d_2}\otimes\cdots\otimes\bbC^{d_N}$ when $k\leq \fl{N}{2}-1$, this is because the quantum MacWilliams identiy  is more complicated in heterogeneous systems  than in homogeneous systems \cite{nebe2006self}.

\section{Conclusion}\label{con}

In this paper, we gave some  upper bounds on the parameter $k$ for the existence of $k$-uniform states in $(\bbC^d)^{\otimes N}$ when $d=3,4,5$  by using enumerators and the Fast Zeilberger Package. We also generalized Scott's bound to heterogeneous systems, and provided some non-existence results of AME states in heterogeneous systems.
However,  there are still a lot of questions left. For example,  how to prove Conjecture~\ref{conjecture}? How to give some upper bounds on the parameter $k$ for the existence of $k$-uniform states in $(\bbC^d)^{\otimes N}$ when $d\geq 6$? How to generalize Rains' bound to heterogeneous systems?

\section*{Acknowledgments}
The authors thank Z.-W. Sun and C. Wang for providing the idea to use the Fast Zeilberger Package. The authors also thank F. Huber for pointing out the two different definitions of AME states in heterogeneous systems.
 Y.N. and X.Z.  acknowledges funding from the National Key Research and Development Program of China
(Grant No. 2020YFA0713100), the NSFC under
Grants No. 12171452 and No. 12231014,  and the Innovation Program for Quantum Science
and Technology (Grant No. 2021ZD0302902).  Q.Z.  acknowledges funding
from HKU Seed Fund for Basic Research for New Staff via
Project 2201100596 and Guangdong Natural Science FundGeneral Programme via Project 2023A1515012185.
\appendix

\subsection{Two Lemmas}\label{appendix:onelemma}

\begin{lemma}[Lagrange inversion theorem \cite{dobrushkin2016methods}]\label{lagrange}
Let $z=f(w)$ be an analytic function at the point $0$, with $f(0)=0$ and $f'(0)\neq 0$, then
\begin{equation*}
   w=\sum_{n=1}^{+\infty}h_nz^n,
\end{equation*}
where $h_n=\frac{1}{n!}\text{lim}_{w\rightarrow 0}\frac{d^{n-1}}{dw^{n-1}}\left[\left(\frac{w}{f(w)}\right)^n\right]$.
\end{lemma}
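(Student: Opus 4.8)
The plan is to recover the Taylor coefficients of the local inverse of $f$ by the Cauchy integral formula, and then to simplify the resulting contour integral through the substitution $z=f(w)$ together with a single integration by parts; this is the classical route to Lagrange inversion and I would follow it essentially verbatim.

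First I would invoke the holomorphic inverse function theorem: since $f$ is analytic at $0$ with $f(0)=0$ and $f'(0)\neq 0$, $f$ admits an analytic inverse $w=g(z)$ on a neighbourhood of $0$ with $g(0)=0$. Writing $g(z)=\sum_{n\geq 1}h_nz^n$, the coefficients satisfy $h_n=\frac{1}{2\pi i}\oint_{|z|=\epsilon}\frac{g(z)}{z^{n+1}}\,dz$ for every sufficiently small $\epsilon>0$.

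Next I would change variables by $z=f(w)$. For $\epsilon$ small enough, $f$ carries a suitable small circle $\gamma$ about $w=0$ bijectively onto $|z|=\epsilon$ with winding number one, while $g(f(w))=w$ and $dz=f'(w)\,dw$; hence $h_n=\frac{1}{2\pi i}\oint_{\gamma}\frac{w\,f'(w)}{f(w)^{n+1}}\,dw$. Using $\frac{f'(w)}{f(w)^{n+1}}=-\frac{1}{n}\frac{d}{dw}\big(f(w)^{-n}\big)$ and the vanishing of the contour integral of an exact differential, integration by parts gives $h_n=\frac{1}{2\pi i\,n}\oint_{\gamma}\frac{dw}{f(w)^{n}}$. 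Finally, writing $f(w)^{-n}=w^{-n}\big(w/f(w)\big)^{n}$ and noting $w/f(w)\to 1/f'(0)\neq 0$ as $w\to 0$, the function $\big(w/f(w)\big)^{n}$ is analytic near $0$, so the integral isolates its coefficient of $w^{n-1}$, yielding $h_n=\frac{1}{n}[w^{n-1}]\big(w/f(w)\big)^{n}=\frac{1}{n!}\lim_{w\to 0}\frac{d^{n-1}}{dw^{n-1}}\big(w/f(w)\big)^{n}$, which is the asserted formula.

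The one point needing care is the legitimacy of the substitution and the implicit contour deformation — that $f$ restricts to a one-to-one map of a small circle onto a curve winding once about the origin — but this follows immediately from $f'(0)\neq 0$ and continuity, so I do not expect a real obstacle here. As a cleaner alternative that bypasses all analytic bookkeeping, I could run the identical computation purely formally in $\bbC[[w]]$, interpreting $\oint$ as the formal-residue functional $[w^{-1}]$; the integration-by-parts identity $[w^{-1}]\frac{d}{dw}(\cdot)=0$ holds at the level of formal Laurent series, and this delivers the same closed form with no convergence considerations — arguably the most economical presentation for a paper that only needs the coefficient formula itself.
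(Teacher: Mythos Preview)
Your argument is the classical Cauchy--integral derivation of Lagrange inversion and is correct; the change of variables $z=f(w)$, the integration by parts via $f'/f^{n+1}=-\frac{1}{n}(f^{-n})'$, and the extraction of the $w^{n-1}$ coefficient of $(w/f(w))^n$ are all valid, and your remark about carrying the whole computation in $\bbC[[w]]$ via formal residues is a clean alternative.

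Note, however, that the paper does not supply its own proof of this lemma: it is stated with a citation to \cite{dobrushkin2016methods} and used as a black box (only to justify that $y$ can be expanded as a power series in $\phi$ in the proof of Lemma~\ref{alphain}). So there is no ``paper's proof'' to compare against; your write-up is simply a self-contained proof of a result the authors import from the literature.
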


\begin{lemma}[B\"{u}rmann-Lagrange\cite{rains1998shadow,Modern}]\label{lagrange1}
	Let $f(x)$ and $g(x)$ be formal power series, with $g(0)=0$ and $g'(0)\neq 0$. If coefficients $\kappa_{i}$ are defined by
	\begin{equation*}
	f(x)=\sum_{i\geq 0}\kappa_{i}g(x)^{i},
	\end{equation*} then
	\begin{equation*}
	\kappa_{i}=\frac{1}{i}\left[\text{coeff. of}\ \ x^{i-1}\ \text{in}\ f'(x)\left(\frac{x}{g(x)}\right)^{i}\right].
	\end{equation*}
\end{lemma}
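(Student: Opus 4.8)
The plan is to derive Lemma~\ref{lagrange1} from the Lagrange inversion theorem (Lemma~\ref{lagrange}) by a change of variable. Since $g(0)=0$ and $g'(0)\neq 0$, the series $x/g(x)$ is a unit in the ring of formal power series, and $g$ has a compositional inverse: there is a unique formal power series $\varphi(y)$ with $\varphi(0)=0$ and $\varphi'(0)=1/g'(0)$ such that $g(\varphi(y))=y$ and $\varphi(g(x))=x$. Substituting $x=\varphi(y)$ into the defining identity $f(x)=\sum_{i\geq 0}\kappa_i g(x)^i$ turns it into $f(\varphi(y))=\sum_{i\geq 0}\kappa_i y^i$, so $\kappa_i$ is exactly the coefficient of $y^i$ in the power series $f\circ\varphi$. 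Hence the lemma reduces to proving, for $i\geq 1$, the coefficient-extraction formula $[y^i]\,f(\varphi(y))=\tfrac{1}{i}\,[x^{i-1}](f'(x)(x/g(x))^i)$, the case $i=0$ being simply $\kappa_0=f(0)$.

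I would establish this extraction formula via the formal residue. Write $\mathrm{Res}_x$ for the operator that returns the coefficient of $x^{-1}$ of a formal Laurent series; then $[y^i]f(\varphi(y))=\mathrm{Res}_y(y^{-i-1}f(\varphi(y)))$. Performing the substitution $y=g(x)$, which is legitimate because $g$ has order one at the origin, with $dy=g'(x)\,dx$ and $\varphi(g(x))=x$, the right-hand side becomes $\mathrm{Res}_x(g(x)^{-i-1}f(x)g'(x))$. Now $g'(x)g(x)^{-i-1}=-\tfrac{1}{i}\,\frac{d}{dx}(g(x)^{-i})$, and since the residue of the derivative of a Laurent series is zero, an integration by parts yields $\mathrm{Res}_x(g(x)^{-i-1}f(x)g'(x))=\tfrac{1}{i}\,\mathrm{Res}_x(f'(x)g(x)^{-i})$. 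Finally $g(x)^{-i}=x^{-i}(x/g(x))^i$, so the residue on the right equals $[x^{i-1}](f'(x)(x/g(x))^i)$, which is the claimed identity. An equivalent route, if one prefers to invoke Lemma~\ref{lagrange} verbatim, expands $f(x)=\sum_m f_m x^m$ and uses the ``powers'' corollary $[y^i]\varphi(y)^m=\tfrac{m}{i}[x^{i-m}](x/g(x))^i$, which follows from Lemma~\ref{lagrange} once one notes that its formula for $h_n$ equals $\tfrac{1}{n}[x^{n-1}](x/g(x))^n$; summing over $m$ and recognizing $\sum_m m f_m x^{m-1}=f'(x)$ then gives the same conclusion.

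The computations are classical and short; the only points that need care are formal rather than substantive. One must justify the change of variable inside the residue --- the standard rule $\mathrm{Res}_y(\omega(y))=\mathrm{Res}_x(\omega(g(x))g'(x))$ for formal Laurent series, valid precisely because $g$ has order one at the origin --- and one must track the index shift in the integration by parts correctly. I expect the main obstacle to be purely expository: presenting the formal-residue manipulation so that it is manifestly an identity of formal power series rather than something covertly relying on analytic convergence, and being explicit that the formula is asserted for $i\geq 1$ with $\kappa_0=f(0)$ treated separately. Beyond this bookkeeping there is no genuine difficulty.
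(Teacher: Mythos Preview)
Your proof is correct and follows the classical formal-residue route to the B\"urmann--Lagrange formula. Note, however, that the paper does not actually prove Lemma~\ref{lagrange1}: it is stated in the appendix as a known result with citations to \cite{rains1998shadow,Modern}, so there is no ``paper's own proof'' to compare against. Your argument via the compositional inverse $\varphi$, the residue change-of-variable, and integration by parts is standard and sound; the alternative you sketch (expanding $f$ and invoking Lemma~\ref{lagrange} termwise) is equally valid and arguably closer in spirit to how the paper has set things up, since Lemma~\ref{lagrange} is stated immediately before.
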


\subsection{The proof of Lemma~\ref{alphain}}\label{appendix:alphain}
\begin{proof}
By Eq.~\eqref{eq:a_0c_0}, we know that $\alpha_0(N)=\alpha_{00}=1$. Since it is not easy to solve the non-homogeneous linear equations of Eq.~\eqref{eq:a_0c_0}, we compute $\alpha_i(N)=\alpha_{i0}$ by using formal power series. In order to compute $\alpha_{i0}$, we  assume $a_1=a_2=\cdots=a_{\fl{N}{2}}=0$.
In this case, $c_{i}=\sum_{j=0}^{i}\alpha_{ij}a_{j}=\alpha_{i0}=\alpha_{i}(N)$ for $1\leq i\leq \fl{N}{2}$.

According to Eq.~(\ref{A1}), we have
\begin{equation*}
\sum_{j=0}^{N}a_{j}y^{j}=1+\sum_{j=\fl{N}{2}+1}^{N} a_{j}y^{j}=\sum_{i=0}^{\fl{N}{2}}c_{i}(1+(d-1)y)^{N-2i}(y(1-y))^{i}.
	\end{equation*}
 Let $f(y)=1+(d-1)y$ and $g(y)=y(1-y)$, then
	\begin{equation*}
	1+\sum_{j=\fl{N}{2}+1}^{N} a_{j}y^{j}=\sum_{i=0}^{\fl{N}{2}}c_{i}f^{N-2i}g^{i}.
	\end{equation*}
	Dividing by $f^{N}$ on both sides, we have
	\begin{equation}\label{eq:f}
	f^{-N}=\sum_{i=0}^{\fl{N}{2}}c_{i}\phi^{i}-f^{-N}\sum_{j=\fl{N}{2}+1}^{N}a_{j}y^{j},
	\end{equation}
	where $\phi=\frac{g}{f^{2}}$. Note that   $f^{-N}=(1+(d-1)y)^{-N}=\sum_{i=0}^{+\infty}\binom{N+i-1}{N-1}((1-d)y)^i$.
 Then Eq.~\eqref{eq:f} can be expressed by
 \begin{equation}\label{eq:f1}
	f^{-N}=\sum_{i=0}^{\fl{N}{2}}c_{i}\phi^{i}+\sum_{j=\fl{N}{2}+1}^{+\infty}\beta_{j}y^{j},
 \end{equation}
Since $\phi(0)=0$ and $\phi'(0)\neq0$, $y=\sum_{n=1}^{+\infty}h_n\phi^n$ by Lemma~\ref{lagrange}. Then Eq.~\eqref{eq:f1} can be expressed by
\begin{equation*}
   	f^{-N}=\sum_{i=0}^{\fl{N}{2}}c_{i}\phi^{i}+\sum_{j=\fl{N}{2}+1}^{+\infty}\gamma_{j}\phi^{j},
\end{equation*}
According to Lemma~\ref{lagrange1}, we have \[c_{i}=-\frac{N(d-1)}{i}\left[\text{coeff. of}\ \ y^{i-1}\ \text{in}\ (1+(d-1)y)^{-(N+1-2i)}(1-y)^{-i}\right].\]
Since $(1+(d-1)y)^{-(N+1-2i)}=\sum_{\ell_1=0}^{+\infty}\binom{N-2i+\ell_1}{N-2i}((1-d)y)^{\ell_1}$ and $(1-y)^{-i}=\sum_{\ell_2=0}^{+\infty}\binom{i-1+\ell_2}{i-1}y^{\ell_2}$ by the formal power series,  we have \[\alpha_{i}(N)=c_i=-\frac{N(d-1)}{i}\sum_{j=0}^{i-1}(1-d)^{j}\binom{N-2i+j}{N-2i}\binom{2i-2-j}{i-1}.\]
\end{proof}

\subsection{The proof of Lemma~\ref{bij}}\label{appendix:bij}
\begin{proof} According to Eq.~\eqref{eq:newbj}, we have
\begin{equation}
    b_{j}=\sum_{m=0}^{j}2^{2m+t}d^{-(\fl{N}{2}-m)}\binom{\fl{N}{2}-m}{\fl{N}{2}-j}(-1)^{\fl{N}{2}-j}c_{\fl{N}{2}-m},
\end{equation}
 where $0\leq j\leq \fl{N}{2}$. Let $p=\fl{N}{2}$, then
	\begin{equation}\label{eq9}
	b_{j}=\sum_{m=0}^{j}(-1)^{p-j}2^{2m+t}d^{-(p-m)}\binom{p-m}{p-j}c_{p-m},
	\end{equation}
	where $0\leq j\leq p$. Let $x_{i}=c_{p+1-i}$ for $1\leq i \leq p+1$, we have:
	$$\left\{
	\begin{aligned}
	&b_{0}=A_{11}x_{1},\\
	&b_{1}=A_{21}x_{1}+A_{22}x_{2},\\
	&b_{2}=A_{31}x_{1}+A_{32}x_{2}+A_{33}x_{3},\\
	&\cdots\\
	&b_{p}=A_{p+1,1}x_{1}+A_{p+1,2}x_{2}+A_{p+1,3}x_{3}+\cdots+A_{p+1,p+1}x_{p+1},
	\end{aligned}
	\right.$$
	where the coefficient matrix $A=(A_{\ell j})$ with
	$$A_{lj}=\left\{
	\begin{array}{lll}
	(-1)^{p+1-\ell}2^{2(j-1)+t}d^{-(p-j+1)}\binom{p-j+1}{p-\ell+1},&   &1\leq j\leq \ell\leq p+1,  \\
	0,&   &1\leq \ell< j\leq p+1.
	\end{array}
	\right.
	$$
	The inverse matrix $A^{-1}=(A^{-1}_{\ell j})$ is given by
	$$A^{-1}_{lj}=\left\{
	\begin{array}{lll}
	(-1)^{p+1-\ell}2^{-2(\ell-1)-t}d^{p-\ell+1}\binom{p-j+1}{p-\ell+1},&   &1\leq j\leq \ell\leq p+1,  \\
	0,&   &1\leq \ell< j\leq p+1.
	\end{array}
	\right.
	$$\\
	This because when $\ell< j$, $(AA^{-1})_{\ell j}=0$; when
	 $\ell \geq j$,
	\begin{align*}
	(AA^{-1})_{\ell j}&=\sum_{i=1}^{p+1}A_{li}A_{ij}^{-1}\\
	&=\sum_{i=j}^{\ell}(-1)^{-i-\ell}\binom{p-j+1}{p-i+1}\binom{p-i+1}{p-\ell+1}\\
	&=\binom{p-j+1}{p-\ell+1}\sum_{i=j}^{\ell}(-1)^{\ell-i}\binom{\ell-j}{\ell-i}\\
	&=\binom{p-j+1}{p-\ell+1}\sum_{t=0}^{\ell-j}(-1)^{t}\binom{\ell-j}{t}\\
	&=\binom{p-j+1}{p-\ell+1}(1-1)^{\ell-j}\\
	&=\delta_{\ell j}.
	\end{align*}
	Thus
	\begin{align*}
	c_{i}=x_{p+1-i}&=\sum_{j=1}^{p+1-i}A_{p+1-i,j}^{-1}b_{j-1}\\
	&=\sum_{j=1}^{p+1-i}(-1)^{i}2^{-2(p-i)-t}d^{i}\binom{p-j+1}{i}b_{j-1}\\
	&=\sum_{j=0}^{p-i}(-1)^{i}2^{-2(p-i)-t}d^{i}\binom{p-j}{i}b_{j}\\
	&=(-1)^{i}2^{2i-N}d^{i}\sum_{j=0}^{\fl{N}{2}-i}\binom{\fl{N}{2}-j}{i}b_{j}.
	\end{align*}
\end{proof}

\subsection{The proof of Theorem~\ref{D3bound}}\label{V111:appen}
When $\ell=-2$, for the same discussion as Theorem~\ref{D3bound}.
\begin{equation*}
\alpha_{6m}(14m-2)=-(\frac{14m-2}{3m})\sum_{i=0}^{6m-1}(-2)^{i}\binom{i+2m-2}{2m-2}\binom{12m-2-i}{6m-1}.
\end{equation*}
Let $n=2m$, and
\begin{equation*}
p_n=\sum_{i=0}^{3n-1}(-2)^{i}\binom{i+n-2}{n-2}\binom{6n-2-i}{3n-1}.
\end{equation*}
By using the Fast Zeilberger Package in Mathematica \cite{Fastzeil}, we obtain a recursion of $p_n$,
$$h_{n+2}p_{n+2}=h_{n+1}p_{n+1}+h_{n}p_{n},$$
where
\begin{align*}
h_{n+2}=&729 n (1 + n) (1 + 3 n) (2 + 3 n) (4 + 3 n) (5 + 3 n) (-135 - 371 n -
261 n^2 + 1247 n^3);\\
h_n=&448 (-2 + 7 n) (-1 + 7 n) (1 + 7 n) (2 + 7 n) (3 + 7 n) (4 +
7 n) (480 + 2848 n + 3480 n^2 + 1247 n^3);\\
h_{n+1}=&48 n (1 + 3 n) (2 + 3 n) (-775965 - 4093609 n - 9631385 n^2 -
9070501 n^3 - 1782359 n^4 + 1633193 n^5 + 496306 n^6).
\end{align*}
Note that $h_{n}$, $h_{n+1}$ and $h_{n+2}$ are all positive when $n\geq 3$, $p_1=6$, $p_2=120$, $p_3=3150$ and $p_4=80304$. This implies that $p_n>0$ for $n\geq 2$. Thus $\alpha_{6m}(14m-2)<0$ for $m\geq 1$.

 The same analysis holds for other cases. We can get the same recursion of $p_n$. So we only list the formula of $h_i$.

When $\ell=-3$,
\begin{align*}
h_{n+2}=&729 (-1 + n) (1 + n) (1 + 3 n) (2 + 3 n) (4 + 3 n) (5 + 3 n) (-504 -608 n - 1392 n^2 + 1247 n^3);\\
h_n=&448 (-3 + 7 n) (-2 + 7 n) (-1 + 7 n) (1 + 7 n) (2 + 7 n) (3 + 7 n) (-1257 + 349 n + 2349 n^2 + 1247 n^3);\\
h_{n+1}=&48 (1 + 3 n) (2 + 3 n) (3214890 + 9384249 n + 14090971 n^2 - 2171291 n^3 - 19263725 n^4 - 11599109 n^5\\
&+ 686749 n^6 + 496306 n^7).
\end{align*}

When $\ell=-4$,
\begin{align*}
h_{n+2}=&729 (-2 + n) (1 + n) (1 + 3 n) (2 + 3 n) (4 + 3 n) (5 + 3 n) (-1353 -563 n - 2523 n^2 + 1247 n^3);\\
h_{n}=&448 (-4 + 7 n) (-3 + 7 n) (-2 + 7 n) (-1 + 7 n) (1 + 7 n) (2 + 7 n) (-3192 - 1868 n + 1218 n^2 + 1247 n^3);\\
h_{n+1}=&48 (1 + 3 n) (2 + 3 n) (18830070 + 43348599 n + 74568645 n^2+40640779 n^3 - 13416795 n^4 - 24189239 n^5\\
&- 259695 n^6 + 496306 n^7).
\end{align*}

When $\ell=0$,
\begin{align*}
h_{n+2}=&729 (-2 + n) (1 + n) (2 + n) (2 + 3 n) (4 + 3 n) (5 + 3 n) (7 + 3 n) (-3645 - 3083 n - 783 n^2 + 1247 n^3);\\
h_{n}=&448 n (1 + 7 n) (2 + 7 n) (3 + 7 n) (4 + 7 n) (5 + 7 n) (6 + 7 n) (-6264 - 908 n + 2958 n^2 + 1247 n^3);\\
h_{n+1}=&48 (1 + n) (2 + 3 n) (4 + 3 n) (197026830 + 382545627 n +243647573 n^2 - 48990041 n^3 - 109302835 n^4\\
& - 33053867 n^5 +1425437 n^6 + 496306 n^7).
\end{align*}

When $\ell=1$,
\begin{align*}
h_{n+2}=&729 (-1 + n) (2 + n) (2 + 3 n) (4 + 3 n) (5 + 3 n) (7 + 3 n) (-1992 -1808 n + 348 n^2 + 1247 n^3);\\
h_{n}=&448 (1 + 7 n) (2 + 7 n) (3 + 7 n) (4 + 7 n) (5 + 7 n) (6 + 7 n) (-2205 + 2629 n + 4089 n^2 + 1247 n^3);\\
h_{n+1}=&48 (2 + 3 n) (4 + 3 n) (51994386 + 81931221 n - 12029301 n^2 -109375991 n^3 - 77973861 n^4 \\
&- 14829041 n^5 + 2371881 n^6 + 496306 n^7).
\end{align*}

When $\ell=2$,
\begin{align*}
h_{n+2}=&729 n (2 + n) (2 + 3 n) (4 + 3 n) (5 + 3 n) (7 + 3 n) (-795 - 251 n +1479 n^2 + 1247 n^3);\\
h_{n}=&448 (2 + 7 n) (3 + 7 n) (4 + 7 n) (5 + 7 n) (6 + 7 n) (8 +7 n) (1680 + 6448 n + 5220 n^2 + 1247 n^3);\\
h_{n+1}=&48 (2 + 3 n) (4 + 3 n) (-645120 - 28032777 n - 76821521 n^2 - 79862429 n^3 - 31964789 n^4\\
& + 622405 n^5 + 3318325 n^6 + 496306 n^7).
\end{align*}

When $\ell=3$,
\begin{align*}
h_{n+2}=&729 (1 + n) (2 + n) (2 + 3 n) (4 + 3 n) (5 + 3 n) (7 + 3 n) (192 +1588 n + 2610 n^2 + 1247 n^3);\\
h_{n}=&448 (3 + 7 n) (4 + 7 n) (5 + 7 n) (6 + 7 n) (8 +7 n) (9+7 n) (5637 + 10549 n + 6351 n^2 + 1247 n^3);\\
h_{n+1}=&48 (1 + n) (2 + 3 n) (4 + 3 n) (-5160960 - 12315591 n - 6140488 n^2 + 7365557 n^3 + 9532008 n^4 \\
&+ 3768463 n^5 + 496306 n^6).
\end{align*}

When $\ell=4$,
\begin{align*}
h_{n+2}=&729 (1 + n) (2 + n) (2 + 3 n) (4 + 3 n) (5 + 3 n) (7 + 3 n) (1215 + 2494 n + 1247 n^2);\\
h_{n}=&448 (4 + 7 n) (5 + 7 n) (6 + 7 n) (8 + 7 n) (9 + 7 n) (10 + 7 n) (4956 + 4988 n + 1247 n^2);\\
h_{n+1}=&48 (1 + n) (3 + 2 n) (2 + 3 n) (4 + 3 n) (1148175 + 3168228 n + 3289453 n^2 + 1488918 n^3 + 248153 n^4).
\end{align*}

When $\ell=5$,
\begin{align*}
h_{n+2}=&729 (-1 + n) (2 + n) (3 + n) (4 + 3 n) (5 + 3 n) (7 + 3 n) (8 + 3 n) (-4632 - 1568 n + 2088 n^2 + 1247 n^3);\\
h_{n}=&448 (1 + n) (5 + 7 n) (6 + 7 n) (8 + 7 n) (9 + 7 n) (10 + 7 n) (11 + 7 n) (-2865 + 6349 n + 5829 n^2 + 1247 n^3);\\
h_{n+1}=&48 (2 + n) (4 + 3 n) (5 + 3 n) (317472570 + 181921041 n -370788709 n^2 - 455647763 n^3 - 173370589 n^4\\
& - 15074909 n^5 + 4057013 n^6 + 496306 n^7).
\end{align*}

When $\ell=6$,
\begin{align*}
h_{n+2}=&729 n (2 + n) (3 + n) (4 + 3 n) (5 + 3 n) (7 + 3 n) (8 + 3 n) (-1743 + 1309 n + 3219 n^2 + 1247 n^3);\\
h_{n}=&448 (1 + n) (6 + 7 n) (8 + 7 n) (9 + 7 n) (10 + 7 n) (11 + 7 n) (12 + 7 n) (4032 + 11488 n + 6960 n^2 + 1247 n^3);\\
h_{n+1}=&48 (2 + n) (4 + 3 n) (5 + 3 n) (-23224320 - 267580053 n - 455857497 n^2 - 300018401 n^3 - 71856645 n^4\\
& + 6011233 n^5 + 5003457 n^6 + 496306 n^7).
\end{align*}

When $\ell=7$,
\begin{align*}
h_{n+2}=&729 (2 + n) (3 + n) (4 + 3 n) (5 + 3 n) (7 + 3 n) (8 + 3 n) (960 + 4468 n + 4350 n^2 + 1247 n^3)\\
h_{n}=&448 (8 + 7 n) (9 + 7 n) (10 + 7 n) (11 + 7 n) (12 + 7 n) (13 + 7 n) (11025 + 16909 n + 8091 n^2 + 1247 n^3);\\
h_{n+1}=&48 (2 + n) (4 + 3 n) (5 + 3 n) (-77414400 - 131658555 n -61786316 n^2 + 12280125 n^3 + 18870400 n^4\\
& + 5453595 n^5 +496306 n^6).
\end{align*}

When $\ell=8$,
\begin{align*}
h_{n+2}=&729 (2 + n) (3 + n) (4 + 3 n) (5 + 3 n) (7 + 3 n) (8 + 3 n) (3723 + 7909 n + 5481 n^2 + 1247 n^3);\\
h_{n}=&448 (8 + 7 n) (9 + 7 n) (10 + 7 n) (11 + 7 n) (12 + 7 n) (13 + 7 n) (18360 + 22612 n + 9222 n^2 + 1247 n^3);\\
h_{n+1}=&48 (2 + n) (4 + 3 n) (5 + 3 n) (11399265 + 53229759 n +86249173 n^2 + 67712283 n^3 + 28055911 n^4\\
&+ 5903733 n^5 +496306 n^6).
\end{align*}

When $\ell=9$,
\begin{align*}
h_{n+2}=&729 (2 + n) (3 + n) (4 + 3 n) (5 + 3 n) (7 + 3 n) (8 + 3 n) (3396 + 4118 n + 1247 n^2);\\
h_{n}=&448 (9 + 7 n) (10 + 7 n) (11 + 7 n) (12 + 7 n) (13 + 7 n) (15 + 7 n) (8761 + 6612 n + 1247 n^2);\\
h_{n+1}=&48 (2 + n) (4 + 3 n) (5 + 3 n) (19995525 + 49463220 n + 47943485 n^2 + 22845248 n^3 + 5361259 n^4 + 496306 n^5).
\end{align*}

\subsection{The proof of Proposition~\ref{thm:AME}}\label{appendix:d4}
(i)	 For an AME state in $\bbC^{4}\otimes(\bbC^{2})^{\otimes 2n}$, by Eq.~\eqref{eq:shadow}, we obtain that $s_1=2f(n)$, where
\begin{equation}
f(n)=(2n+1)+\sum_{i=1}^{n}(-1)^{i}(2n+1-2i)\left[\binom{2n}{i}\frac{1}{2^i}+\binom{2n}{i-1}\frac{1}{4\times 2^{i-1}}\right],
\end{equation}
and $s_3=2g(n)$, where
\begin{equation}
\begin{aligned}
g(n)=&\binom{2n+1}{3}+\sum_{i=1}^{n}\left[(-1)^{i-3}\binom{i}{3}+(-1)^{i-2}(2n+1-i)\binom{i}{2}+(-1)^{i-1}\binom{2n+1-i}{2}i+(-1)^i\binom{2n+1-i}{3}\right]\\
&\left[\binom{2n}{i}\frac{1}{2^i}+\binom{2n}{i-1}\frac{1}{4\times 2^{i-1}}\right].
\end{aligned}
\end{equation}	

There are two cases. If $n=2m$, by using the Fast Zeilberger Package in Mathematica \cite{Fastzeil}, we have
\begin{equation}
f(2m+2)=p_m\cdot f(2m)-q_m,
\end{equation}
where
\begin{equation}
\begin{aligned}
p_m&=\frac{41+36m}{16 (5 + 36 m)},\\
q_m&=\frac{(2 + m) (5 + 2 m) (7 + 97 m + 435 m^2 + 252 m^3)\binom{7 + 4 m}{2 + 2 m}\times 3\times2^{-1 - 2 m} }{
	16 (5 + 36 m) (1 + 4 m) (3 + 4 m) (5 + 4 m) (7 + 4 m)}.
\end{aligned}
\end{equation}
Since  $p_m>0$, $q_m>0$ when $m\geq 2$, and $f(4)=-\frac{7}{8}<0$, we have $f(2m+2)<0$ for $m\geq 2$.
Thus
\begin{equation}\label{eq:d4S1}
s_1=2f(n)<0 \quad \text{for} \quad n=2m\geq 4.
\end{equation}

If $n=2m+1$, by using the Fast Zeilberger Package in Mathematica \cite{Fastzeil}, we have
\begin{equation}
g(2m+3)=u_m\cdot g(2m+1)-v_m,
\end{equation}
where
\begin{equation}
\begin{aligned}
u_m&=\frac{(3 + 2 m) (5 + 4 m) (17 + 12 m)}{16 (1 + 2 m) (1 + 4 m) (5 + 12 m)},\\
v_m&=\frac{ (3 + m) (1 + 2 m) (3 + 2 m) (5 + 2 m) (39 + 916 m + 2452 m^2 +
	1008 m^3) \binom{9+4m}{3 + 2 m}\times 2^{-4 - 2 m} }{
	16 (1 + 2 m) (1 + 4 m) (5 + 12 m) (3 + 4 m) (7 + 4 m) (9 + 4 m)}.
\end{aligned}
\end{equation}

Since  $u_m>0$, $v_m>0$ when $m\geq 2$, and $g(5)=-\frac{225}{32}<0$, we have $g(2m+3)<0$ for $m\geq 2$.
Thus
\begin{equation}\label{eq:d4S3}
s_3=2g(n)<0 \quad \text{for} \quad n=2m+1\geq 5.
\end{equation}
By Eqs.~\eqref{eq:shadow}, \eqref{eq:d4S1} and \eqref{eq:d4S3}, we know that
AME states in $\bbC^4\otimes(\bbC^{2})^{\otimes 2n}$ do not exist when $n\geq 4$.

(ii) For an AME state in $\bbC^{d}\otimes(\bbC^{3})^{\otimes 2n}$, by Eq.~\eqref{eq:shadow},
 $s_1=2f(n)$, where
	\begin{equation}
	f(n)=(2n+1)+\sum_{i=1}^{n}(-1)^{i}(2n+1-2i)\left[\binom{2n}{i}\frac{1}{3^i}+\binom{2n}{i-1}\frac{1}{d\times 3^{i-1}}\right],
	\end{equation}
	and $s_3=2g(n)$, where
	\begin{equation}
	\begin{aligned}
	g(n)=&\binom{2n+1}{3}+\sum_{i=1}^{n}\left[(-1)^{i-3}\binom{i}{3}+(-1)^{i-2}(2n+1-i)\binom{i}{2}+(-1)^{i-1}\binom{2n+1-i}{2}i+(-1)^i\binom{2n+1-i}{3}\right]\\
	&\left[\binom{2n}{i}\frac{1}{3^i}+\binom{2n}{i-1}\frac{1}{d\times 3^{i-1}}\right].
	\end{aligned}
	\end{equation}
There are two cases. If $n=2m$, then
\begin{equation}\label{eq:fm}
  f(2m+2)=p_mf(2m)-q_m.
\end{equation}
 If $n=2m+1$, then
 \begin{equation}\label{eq:gm}
  g(2m+3)=u_mg(2m+1)-v_m.
\end{equation}
Similarly, for we use the Fast Zeilberger Package in Mathematica \cite{Fastzeil} to calculate $p_m$, $q_m$, $u_m$ and $v_m$ for $d=2$ and $4\leq d\leq 9$.

When $d=2   $,
\begin{equation*}
\begin{aligned}
    p_m=&\frac{16 (11+8 m)}{81 (3+8 m)},\\
    q_m=&\frac{3^{-2m}(2+m) (5+2 m) (-53+388 m+1856 m^2+1280 m^3)\binom{7+4m}{2+2m}}{324 (1+4 m) (3+4 m) (5+4 m) (7+4 m)(3+8 m)},\\
    u_m=&\frac{16 (3+2 m) (5+4 m) (17+8 m)}{81 (1+2 m) (1+4 m) (9+8 m)},\\
    v_m=&\frac{3^{-1-2 m}(3+m) (1+2 m) (3+2 m) (5+2 m) (-771+1220 m+3136 m^2+1280 m^3)\binom{7+2 (1+2 m)}{3+2 m}}{324 (3+4 m) (7+4 m) (9+4 m)(1+2 m) (1+4 m) (9+8 m) }.
\end{aligned}
\end{equation*}
Note that $f(6)<0$, and $g(9)<0$. By Eqs.~\eqref{eq:fm}
and \eqref{eq:gm}, we know  that
$f(2m+2)<0$ for $m\geq 2$, and $g(2m+3)<0$ for $m\geq 3$.

When $d=4$,
\begin{equation*}
\begin{aligned}
    p_m=&\frac{16 (29+24m)}{81 (5+24 m)},\\
    q_m=&\frac{3^{-2m} (2+m) (5+2 m) (-169+276 m+7360 m^2+5376 m^3)\binom{7+4m}{2+2m}}{648 (1+4 m) (3+4 m) (5+4 m) (7+4 m) (5+24 m)},\\
    u_m=&\frac{16 (3+2 m) (5+4 m) (13+8 m)}{81 (1+2 m) (1+4 m) (5+8 m) },\\
    v_m=&\frac{3^{-1-2 m}(3+m) (1+2 m) (3+2 m) (5+2 m) (-789+284 m+3648 m^2+1792 m^3)\binom{7+2 (1+2 m)}{3+2 m}}{648 (3+4 m) (7+4 m) (9+4 m) (1+2 m) (1+4 m) (5+8 m) }.
\end{aligned}
\end{equation*}
Note that $f(6)<0$, and $g(9)<0$. By Eqs.~\eqref{eq:fm}
and \eqref{eq:gm}, we know  that
$f(2m+2)<0$ for $m\geq 2$, and $g(2m+3)<0$ for $m\geq 3$.

When $d=5$,
\begin{equation*}
\begin{aligned}
    p_m=&\frac{16 (19+16 m) }{81 (3+16 m)},\\
    q_m=&\frac{2\times 3^{-2 m} (2+m) (5+2 m) (-1+8 m) (25+192 m+128 m^2)\binom{7+4 m}{2+2 m}}{405 (1+4 m) (3+4 m) (5+4 m) (7+4 m) (3+16 m) },\\
    u_m=&\frac{16 (3+2 m) (5+4 m) (25+16 m)}{81 (1+2 m) (1+4 m) (9+16 m) },\\
    v_m=&\frac{2\times 3^{-1-2 m} (3+m) (1+2 m) (3+2 m) (5+2 m) (-417+40 m+2048 m^2+1024 m^3) \binom{7+2 (1+2 m)}{3+2 m}}{405 (3+4 m) (7+4 m) (9+4 m) (1+2 m) (1+4 m) (9+16 m) }.
\end{aligned}
\end{equation*}
Note that $f(6)<0$, and $g(9)<0$. By Eqs.~\eqref{eq:fm}
and \eqref{eq:gm}, we know  that
$f(2m+2)<0$ for $m\geq 2$, and $g(2m+3)<0$ for $m\geq 3$.

When $d=6$,
\begin{equation*}
\begin{aligned}
    p_m=&\frac{16 (47+40 m) }{81 (7+40 m) },\\
    q_m=&\frac{3^{-2 m} (2+m) (5+2 m) (-71-84 m+5312 m^2+3840 m^3) \binom{7+4 m}{2+2 m}}{324(1+4 m) (3+4 m) (5+4 m) (7+4 m)(7+40 m) },\\
    u_m=&\frac{16 (3+2 m) (5+4 m) (61+40 m) }{81 (1+2 m) (1+4 m) (21+40 m) },\\
    v_m=&\frac{3^{-1-2 m} (3+m) (1+2 m) (3+2 m) (5+2 m) (-1473-148 m+7616 m^2+3840 m^3)\binom{7+2 (1+2 m)}{3+2 m}}{324 (3+4 m) (7+4 m) (9+4 m)(1+2 m) (1+4 m)(21+40 m)  }.
\end{aligned}
\end{equation*}
Note that $f(6)<0$, and $g(9)<0$. By Eqs.~\eqref{eq:fm}
and \eqref{eq:gm}, we know  that
$f(2m+2)<0$ for $m\geq 2$, and $g(2m+3)<0$ for $m\geq 3$.

When $d=7$,
\begin{equation*}
\begin{aligned}
    p_m=&\frac{16 (7+6 m)  }{81 (1+6 m)  },\\
    q_m=&\frac{3^{-2 m} (2+m) (5+2 m) (-13-42 m+1336 m^2+960 m^3)\binom{7+4 m}{2+2 m}}{567(1+4 m) (3+4 m) (5+4 m) (7+4 m)(1+6 m) },\\
    u_m=&\frac{16 (3+2 m)^2 (5+4 m) }{81 (1+2 m)^2 (1+4 m) },\\
    v_m=&\frac{3^{-1-2 m} (3+m) (1+2 m) (3+2 m) (5+2 m) (-117-30 m+632 m^2+320 m^3)\binom{7+2 (1+2 m)}{3+2 m}}{567 (3+4 m) (7+4 m) (9+4 m)(1+2 m)^2 (1+4 m)  }.
\end{aligned}
\end{equation*}
Note that $f(6)<0$, and $g(9)<0$. By Eqs.~\eqref{eq:fm}
and \eqref{eq:gm}, we know  that
$f(2m+2)<0$ for $m\geq 2$, and $g(2m+3)<0$ for $m\geq 3$.

When $d=7$,
\begin{equation*}
\begin{aligned}
    p_m=&\frac{16 (7+6 m)  }{81 (1+6 m)  },\\
    q_m=&\frac{3^{-2 m} (2+m) (5+2 m) (-13-42 m+1336 m^2+960 m^3)\binom{7+4 m}{2+2 m}}{567(1+4 m) (3+4 m) (5+4 m) (7+4 m)(1+6 m) },\\
    u_m=&\frac{16 (3+2 m)^2 (5+4 m) }{81 (1+2 m)^2 (1+4 m) },\\
    v_m=&\frac{3^{-1-2 m} (3+m) (1+2 m) (3+2 m) (5+2 m) (-117-30 m+632 m^2+320 m^3)\binom{7+2 (1+2 m)}{3+2 m}}{567 (3+4 m) (7+4 m) (9+4 m)(1+2 m)^2 (1+4 m)  }.
\end{aligned}
\end{equation*}
Note that $f(6)<0$, and $g(9)<0$. By Eqs.~\eqref{eq:fm}
and \eqref{eq:gm}, we know  that
$f(2m+2)<0$ for $m\geq 2$, and $g(2m+3)<0$ for $m\geq 3$.

When $d=8$,
\begin{equation*}
\begin{aligned}
    p_m=&\frac{16 (65+56 m)   }{81 (9+56 m)  },\\
    q_m=&\frac{3^{-2 m} (2+m) (5+2 m) (-185-1196 m+27584 m^2+19712 m^3)\binom{7+4 m}{2+2 m}}{1296(1+4 m) (3+4 m) (5+4 m) (7+4 m) (9+56 m) },\\
    u_m=&\frac{16 (3+2 m) (5+4 m) (83+56 m)  }{81 (1+2 m) (1+4 m) (27+56 m) },\\
    v_m=&\frac{3^{-1-2 m} (3+m) (1+2 m) (3+2 m) (5+2 m) (-6927-2668 m+38848 m^2+19712 m^3) \binom{7+2 (1+2 m)}{3+2 m}}{1296(3+4 m) (7+4 m) (9+4 m)(1+2 m) (1+4 m) (27+56 m) }.
\end{aligned}
\end{equation*}
Note that $f(6)<0$, and $g(9)<0$. By Eqs.~\eqref{eq:fm}
and \eqref{eq:gm}, we know  that
$f(2m+2)<0$ for $m\geq 2$, and $g(2m+3)<0$ for $m\geq 3$.

When $d=9$,
\begin{equation*}
\begin{aligned}
    p_m=&\frac{16 (37+32 m)   }{81 (5+32 m)  },\\
    q_m=&\frac{4\times 3^{-1-2 m} (2+m) (5+2 m) (-3-38 m+720 m^2+512 m^3) \binom{7+4 m}{2+2 m}}{81(1+4 m) (3+4 m) (5+4 m) (7+4 m) (5+32 m) },\\
    u_m=&\frac{16 (3+2 m) (5+4 m) (47+32 m)  }{81 (1+2 m) (1+4 m) (15+32 m)  },\\
    v_m=&\frac{8\times 3^{-2-2 m} (3+m) (1+2 m) (3+2 m) (5+2 m) (-87-43 m+504 m^2+256 m^3) \binom{7+2 (1+2 m)}{3+2 m}}{81(3+4 m) (7+4 m) (9+4 m)(1+2 m) (1+4 m) (15+32 m)  }.
\end{aligned}
\end{equation*}
Note that $f(6)<0$, and $g(11)<0$. By Eqs.~\eqref{eq:fm}
and \eqref{eq:gm}, we know  that
$f(2m+2)<0$ for $m\geq 2$, and $g(2m+3)<0$ for $m\geq 4$.

\vskip 10pt
\bibliographystyle{IEEEtran}
\bibliography{reference}


\end{document}